\documentclass[acmtog,nonacm,anonymous=false,bookmarksnumbered,unicode,hidelinks,breaklinks=true]{acmart}
\usepackage{tikz}   
\usepackage[utf8]{inputenc}
\usepackage{ amsthm, amsmath, amsfonts} 
\usepackage[linesnumbered,ruled,boxed]{algorithm2e}
\usepackage{textcomp}
\usepackage{xcolor}
\usepackage[normalem]{ulem}
\usepackage{graphicx}
\usepackage{booktabs}
\usepackage{csquotes}
\usepackage{svg}
\usepackage{wrapfig}
\usepackage{subcaption}

\usepackage{xurl}
\newtheorem{thm}{Theorem}

\newtheorem{lemma}{Lemma}

\usepackage[disable]{todonotes}
\usepackage{varwidth}

\setcopyright{acmlicensed}
\copyrightyear{2026}
\acmYear{2026}
\acmDOI{XXXXXXX.XXXXXXX}

\acmJournal{TOG}
\acmVolume{0}
\acmNumber{0}
\acmArticle{0}
\acmMonth{10}

\DeclareSymbolFont{parenthesis}{OT1}{ntxr}{m}{n}

\DeclareFontFamily{U}{bbold}{}
\DeclareFontShape{U}{bbold}{m}{n}{
 <-5.5> s*[0.96] bbold5
 <5.5-6.5> s*[0.96] bbold6
 <6.5-7.5> s*[0.96] bbold7
 <7.5-8.5> s*[0.96] bbold8
 <8.5-9.5> s*[0.96] bbold9
 <9.5-11> s*[0.96] bbold10
 <11-15> s*[0.96] bbold12
 <15-> s*[0.96] bbold17
}{}
\RenewDocumentCommand{\mathbb}{m}{%
  \text{\usefont{U}{bbold}{m}{n}#1}%
}

\usepackage[belowskip=-3.5pt,aboveskip=2pt]{caption}
\def\BibTeX{{\rm B\kern-.05em{\sc i\kern-.025em b}\kern-.08em
    T\kern-.1667em\lower.7ex\hbox{E}\kern-.125emX}}

\begin{document}
\hypersetup{hidelinks}

\title{SAMS: A Structured Graph Framework for Securing Additive Manufacturing Systems}

\author{Sarah Roscoe}
\email{srr6@stmarys-ca.edu}
\orcid{0000-0003-0372-7953}
\author{Lachlan Hovany}
\email{lh39@stmarys-ca.edu}
\author{Kamila Maciocha}
\email{km82@stmarys-ca.edu}
\affiliation{%
  \institution{Saint Mary's College of California}
  \city{Moraga}
  \state{California}
  \country{USA}
}
\author{Witawas Srisa-an}
\email{witawas@unl.edu}
\author{Jitender S. Deogun}
\email{deogun@unl.edu}
\affiliation{%
  \institution{University of Nebraska-Lincoln}
  \city{Lincoln}
  \state{Nebraska}
  \country{USA}
}

\renewcommand{\shortauthors}{Roscoe et al.}

\begin{abstract}
Additive manufacturing, also known as 3D printing, is significant for its use in industries such as aerospace, robotics, medicine, and construction. However, digital mesh files remain vulnerable to geometry defects which can be introduced through malicious tampering or designer error. This can lead to hampered structural integrity and impede its use due to unexpected degradation of the part. 
To address these challenges, we propose SAMS, a multi-layered hierarchical graph framework designed to analyze meshes to enhance the Security of Additive Manufacturing Systems. SAMS captures geometric and topological information across mesh, facet, and surface levels, providing an adaptable architecture for targeted mesh processing tasks. 
We evaluate SAMS's accuracy against existing mesh processing tools. 
The graph construction results in rich structural representation and has the potential to significantly reduce downstream analytical complexity when detecting vulnerabilities. 
This paper presents SAMS, its formal graph-theoretic properties, and releases an open-source Python 3.12 implementation on GitHub (\url{https://anonymous.4open.science/r/SAMS-graph-classes/}) to facilitate future research in secure additive manufacturing systems software.
\end{abstract}

\begin{CCSXML}
<ccs2012>
   <concept>
       <concept_id>10010147.10010371.10010396.10010397</concept_id>
       <concept_desc>Computing methodologies~Mesh models</concept_desc>
       <concept_significance>500</concept_significance>
       </concept>
   <concept>
       <concept_id>10010147.10010371.10010396.10010398</concept_id>
       <concept_desc>Computing methodologies~Mesh geometry models</concept_desc>
       <concept_significance>500</concept_significance>
       </concept>
   <concept>
       <concept_id>10010147.10010371.10010396.10010402</concept_id>
       <concept_desc>Computing methodologies~Shape analysis</concept_desc>
       <concept_significance>300</concept_significance>
       </concept>
   <concept>
       <concept_id>10002950.10003624.10003633.10003643</concept_id>
       <concept_desc>Mathematics of computing~Graphs and surfaces</concept_desc>
       <concept_significance>500</concept_significance>
       </concept>
   <concept>
       <concept_id>10002950.10003624.10003633.10010917</concept_id>
       <concept_desc>Mathematics of computing~Graph algorithms</concept_desc>
       <concept_significance>300</concept_significance>
       </concept>
 </ccs2012>
\end{CCSXML}

\ccsdesc[500]{Computing methodologies~Mesh models}
\ccsdesc[500]{Computing methodologies~Mesh geometry models}
\ccsdesc[300]{Computing methodologies~Shape analysis}
\ccsdesc[500]{Mathematics of computing~Graphs and surfaces}
\ccsdesc[300]{Mathematics of computing~Graph algorithms}

\keywords{mesh analysis, additive manufacturing, anomaly detection, STL file, structured graph}

\maketitle
\section{Introduction}

3D triangle meshes, particularly STL files, represent the foundational geometric structure for computer graphics and additive manufacturing (AM). AM, also called 3D printing, is a rapidly expanding industry projected to exceed \$37 billion in market size, while becoming deeply integrated into mission-critical applications across aerospace, automotive, medical, and construction industries \cite{marketsize2026,abdulhameed2019,forcael2025}. As additively manufactured parts become ubiquitous, ensuring the dependability and security of the underlying digital models is paramount.

Our aim is to provide a system for assuring the integrity of a digital model, by developing a class of graphs that facilitates extraction and analysis of mesh features that an AM user can reference in order to determine if their part is ready for printing and will function as expected. 

We pursue this line of research because detecting defects during the manufacturing process of a part can be difficult. Once the mesh is designed in CAD, many end users choose to rely on slicer software, which primarily performs computationally simple repairs such as flipped or degenerate facets, rather than evaluating underlying structural or security vulnerabilities. 

To make more substantial changes, either the user must have CAD re-construct the parametrized object, or the mesh itself must be modified by other means.
Since the mesh consists of triangles, it can be read and modified by many open-source and freely available mesh processing software. 
Such an opportunity for users also creates opportunities for malicious actors. These applications often run on networked systems, so an attacker could access and compromise a device and modify the part geometry unbeknownst to the user \cite{belikovetsky2017}. 
For example, adding a hollow space called a \emph{void} to a part can be as simple as creating a 4-sided triangular prism, adding little to the file size \cite{sturm2017}.
Other defects can be injected, such as the mutation or movement of the model's vertices.
These defects can substantially compromise the part's usefulness, structural integrity, and the security of the system in which it is used. 

Since additively manufactured parts are used in industries critical to everyday life, the presence of small modifications (such as void or the shifting of vertices, \cite{sturm2017,roscoe2025,belikovetsky2017}) in a design presents a security risk. 
Such issues give rise to the need for detection, remediation, and prevention if possible, of such modifications. 
These modifications could occur maliciously or simply through inadvertent, unintended design errors.
Manual inspection with calipers and other specification testing of a part after it has been manufactured can be accomplished. If the part fails testing, it will have to be re-produced. Any material used will have been wasted, as well as time, energy spent to manufacture the errant part, and labor costs to adjust and refine the part if necessary.

To assure security and prevent wastage of resources, current practice may employ authentication measures which can detect modifications and intervene before producing a defective part. One such measure is side-channel monitoring, such as audio recording or monitoring printer voltage of the part-in-progress being manufactured \cite{gatlin2019,rambabu2021}. Another is producing a digital twin to verify at key manufacturing stages that the part-in-progress is aligned with its design specifications \cite{zhang2020}. These measures may work well for high-scale and high-security environments, but such measures may be not be cost-effective for smaller organizations and individuals who still require non-tampered, reliable parts. 

A part's geometry describes its eventual physical form. Therefore, a primary benefit and risk posed to a part is in its geometry. A poorly designed mesh can fail readily as a printed part whether or not the manufacturing settings are carefully controlled. Conversely, a well-engineered mesh printed with proper settings may still harbor small modifications that were already present. The mesh can appear trustworthy but, when put in use, it may fail sooner than an identical, non-modified part \cite{belikovetsky2017}. Thus, it is reasonable to examine the underlying geometry to detect any such modifications. By intervening at the mesh stage, the waste of time and material which would have been spent in later stages can be avoided. One such geometry-focused, deterministic method has already been accomplished to detect voids \cite{roscoe2025}. 

Graphs have been used to represent the structure of an object \cite{dawson-haggerty2022}. Such formalisms are to some extent used in mesh processing software or research studies for purposes such as shape classification or segmentation \cite{tang2021,vecchio2023}. 
Moreover, there are taxonomies developed which use graph theory to assess risk modeling \cite{rahman2024}.
We present a unified framework that represents a mesh as a set of hierarchical graphs to analyze the geometry of a mesh. These graphs can be extended or adapted to suit a user's specific analysis needs. Our specific contributions are now described.

\subsection{Contribution}
In this paper, we use information inherent in previous graph representations of meshes and expand the representation to formalize a class of hierarchical graphs. 

We propose SAMS, a new class of structured graphs to improve the Security of Additive Manufacturing Systems. We investigate SAMS' theoretical properties and describe its applications to AM security. SAMS graphs are structured, providing levels of granularity at which mesh properties may be inspected. Specifically, our contributions are as follows:
\begin{itemize}
\item We define and formalize SAMS graphs across three granularities: mesh, facet, and surface. This contribution enables topological feature extraction and provides a foundation for defect analysis.
\item We implement SAMS software open-source and object-oriented using Python 3.12. It is available on GitHub (\url{https://anonymous.4open.science/r/SAMS-graph-classes/}). We evaluate its accuracy, file processing times, and computational throughput against established mesh processing libraries.
\item We demonstrate how SAMS may be used for security, particularly detecting localized vulnerabilities which can compromise structural integrity.
\end{itemize}

\subsection{Paper Structure}
The remainder of this paper is organized as follows. 
Section \ref{sec:background} contains a background of additive manufactuing. SAMS graphs and their properties are presented in Section \ref{sec:approach}. Implementation details, experimental evaluation, and results are described and exhibited in Section \ref{sec:implementation-eval-results}. The evaluation results and our insights on the framework as a whole are discussed in Section \ref{sec:discussion}. Several utility scenarios of SAMS graphs are identified in Section \ref{sec:usecases}. We identify related work in Section \ref{sec:related}. Finally, we conclude in Section \ref{sec:conclusion}.

\section{Background of Additive Manufacturing}\label{sec:background}

Our framework is rooted in its applications to additive manufacturing. 
In this section, we outline the background for the additive manufacturing process.

\subsection{STL Triangle Meshes}\label{sec:STL}
The most common triangle mesh is called STL. This file type, thought to stand for STereoLithography, or perhaps Standard Tessellation, or Triangle, Language, is widely used in the additive manufacturing industry to distribute 3D models. Each file contains triangles, the aim of which is typically to collectively describe the boundary of a 3-dimensional model. 
In essence (using definitions from \cite{szilvsi-nagy2003} and \cite{hamann1994}), an STL file is simply a list of one or more defined solids. Each solid has a list of triangles, each defined by 3 points in 3D space, as well as a normal vector indicating which way the triangle faces.

We now present mesh terminology.
A \emph{point}, or \emph{vertex}, is an $(x,y,z)$ point in 3-dimensional Cartesian space. An \emph{edge} is a line segment joining two points. A \emph{triangle}, also called \emph{facet} or \emph{face}, is the triangle formed by joining three distinct vertices. The \emph{normal vector}, also called simply the \emph{normal} or \emph{vector}, is the vector which is perpendicular to the facet. Since the file defines an externally-facing model, a decision must be made of which side is``outside" and which is ``inside". The normal vector specifies the correct option.

A \emph{mesh} is the collection of facets which define the boundary of an object.
An STL file describes one mesh which may have multiple independent structures that define their own boundary region. Each structure is referred to as a \emph{shell}.

Each shell of the mesh should ideally be watertight; that is, it should completely enclose a portion of 3D space. To do so, each edge in the mesh must be shared by at most two triangles, or else the mesh is called \emph{non-watertight} or \emph{non-manifold}. \\
However, throughout the design or modeling process, low-level errors can occur in the exported STL file. Sometimes faces are flipped, or the model has self-intersections. These computational errors can usually be repaired by slicers, but more complex errors cannot.

Errors in the STL file are a relevant issue in the security of additive manufacturing. With its digital components, the cybersecurity of the manufacturing process must be ensured, beginning with determining that the model file is free from tampering. The SAMS graph framework has been developed to continue efforts in this regard. For more information on the importance of mesh and additive manufacturing security, see Section \ref{sec:threatmodel}.

\subsection{Additive Manufacturing Workflow}
\label{sec:AM}
Our objective in this project to secure the digital aspects of the manufacturing process. We focus on the mesh itself for several reasons. We do not choose to analyze in a particular computer-aided design (CAD) software, because while the myriad of applications available share the same basic objectives, their underlying design, cost structures, and ease of use can vary substantially. However, CAD software supports exporting to mesh format, which is done to distribute or sell meshes without compromising intellectual property \cite{osborn2017}, leading to a large availability of files on the internet.

Additive manufacturing (AM) begins with a digital model created by a CAD program such as Blender or AutoCAD, which is then exported to a geometry file called a \emph{mesh}, often in STL format.
A \emph{slicer} such as PrusaSlicer or Cura then slices the mesh into layers and stores the instructions to construct the part in GCODE format, which is executed on the manufacturing machine by laying material.
Post-processing includes any physical measures that needs to be completed before the part is put in use, which may include washing and sintering metal parts, or sanding and painting non-metal parts.

Once the mesh is generated, anyone can download a mesh file and run it through their slicer software of choice. This software may have the same limitations as CAD (variations of implementation, cost, and ease of use). Examining the mesh itself is an efficient and cost-effective entry point to determining the viability of printing a mesh, limited only to the cost of obtaining the mesh files. Slicers are typically equipped with some ability to repair broken meshes, such as flipped faces, but the ability can vary widely between applications. Mesh analysis that goes beyond simple repair is not standardized, and can be restricted to software that is not openly available. 

For these reasons, we focus on mesh analysis, after it is exported from CAD and before it is sent to the slicer.

With a description of additive manufacturing and meshes complete, we can proceed to how meshes may be represented as a graph framework.

\section{SAMS graphs: A Graph-Theoretic Approach}\label{sec:approach}

The following section presents an approach to the representation of STL digital model files as graphs. It begins with a review of graph terminology in Section \ref{sec:graph-terminology}, and an overview of structured graphs in \ref{sec:structuredgraphs}. The SAMS framework is then defined as a structured graph in Section \ref{sec:definitions}. Properties of the graphs are described in \ref{sec:graphproperties}. A possible reduction of the first two levels of the framework, and an algorithm for obtaining it, is presented in \ref{sec:surfacegraph}. An overview of SAMS's Python implementation is given in Section \ref{sec:implementation}.

\subsection{Review of Graph Terminology}\label{sec:graph-terminology}

We adopt standard formal graph-theoretic notation which can be found in any graph theory book, like \cite{chartrand2010}. A \emph{graph} is defined as an ordered pair $G = (V, E)$, where $V$ is the set of \emph{vertices} (or nodes) representing discrete entities, and $E \subseteq \{\{u, v\} \mid u, v \in V, u \neq v\}$ is the set of \emph{edges} representing unordered pairs of relationships between them.

Strictly defined, graph vertices are topological entities and do not possess inherent coordinates in Euclidean space. However, a \emph{graph embedding} maps vertices to positions in $\mathbb{R}^n$, $n\geq 2,$ allowing geometric attributes such as spatial coordinates to be associated with elements of $V$ or $E$ as required.

A graph $H = (V', E')$ is a \emph{subgraph} of $G = (V, E)$, denoted $H \subseteq G$, if $V' \subseteq V$ and $E' \subseteq E$ such that for every edge $e = \{u, v\} \in E'$, the incident vertices $u$ and $v$ are contained in $V'$.

A graph is \emph{connected} if a path exists between every pair of distinct vertices. Connectivity is further quantified by the graph's robustness: a graph is \emph{$k$-vertex-connected} (or \emph{$k$-edge-connected}) if it remains connected following the removal of fewer than $k$ vertices (or edges).

The \emph{degree} of a vertex $v$, denoted $\deg(v)$, is the number of edges incident to it. A graph is \emph{$k$-regular} if $\deg(v) = k$ for all $v \in V$; specifically, a 3-regular graph is referred to as a \emph{cubic} graph.

A graph is \emph{planar} if it admits an embedding in the plane $\mathbb{R}^2$ such that no two edges intersect except at their common endpoints. By Kuratowski’s Theorem \cite{tutte1963}, a graph is non-planar if and only if it contains a subgraph that is a subdivision of the complete graph $K_5$ or the complete bipartite graph $K_{3,3}$.

A planar embedding partitions the plane into regions called \emph{faces}. The \emph{dual} $G^*$ of a planar graph $G$ is constructed by associating a vertex in $G^*$ with each face of $G$ and connecting two vertices in $G^*$ if their corresponding faces share an edge in $G$ (Figure \ref{fig:graph-dual}).

\begin{figure}

	\centering

	\includegraphics[width=0.5\columnwidth]{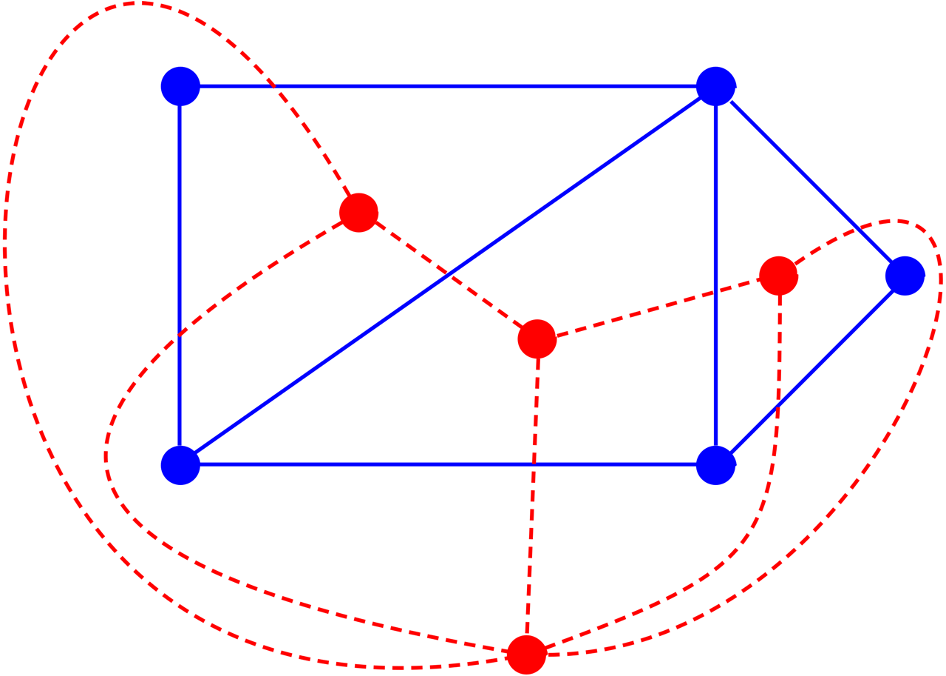}

	\caption[Graph dual.]{A graph (blue, solid lines) and its dual (red, dotted lines). Image is open-source from \cite{dualgraph}.}

	\label{fig:graph-dual}

\end{figure}

 In the context of mesh processing, the dual is constructed by collapsing each facet into a single graph vertex, though we do not include the vertex that corresponds to the space outside the mesh. We refer to this dual graph as the \emph{facet graph} in our work.

\begin{figure}[ht]
    \centering
    \includegraphics[width=0.5\columnwidth]{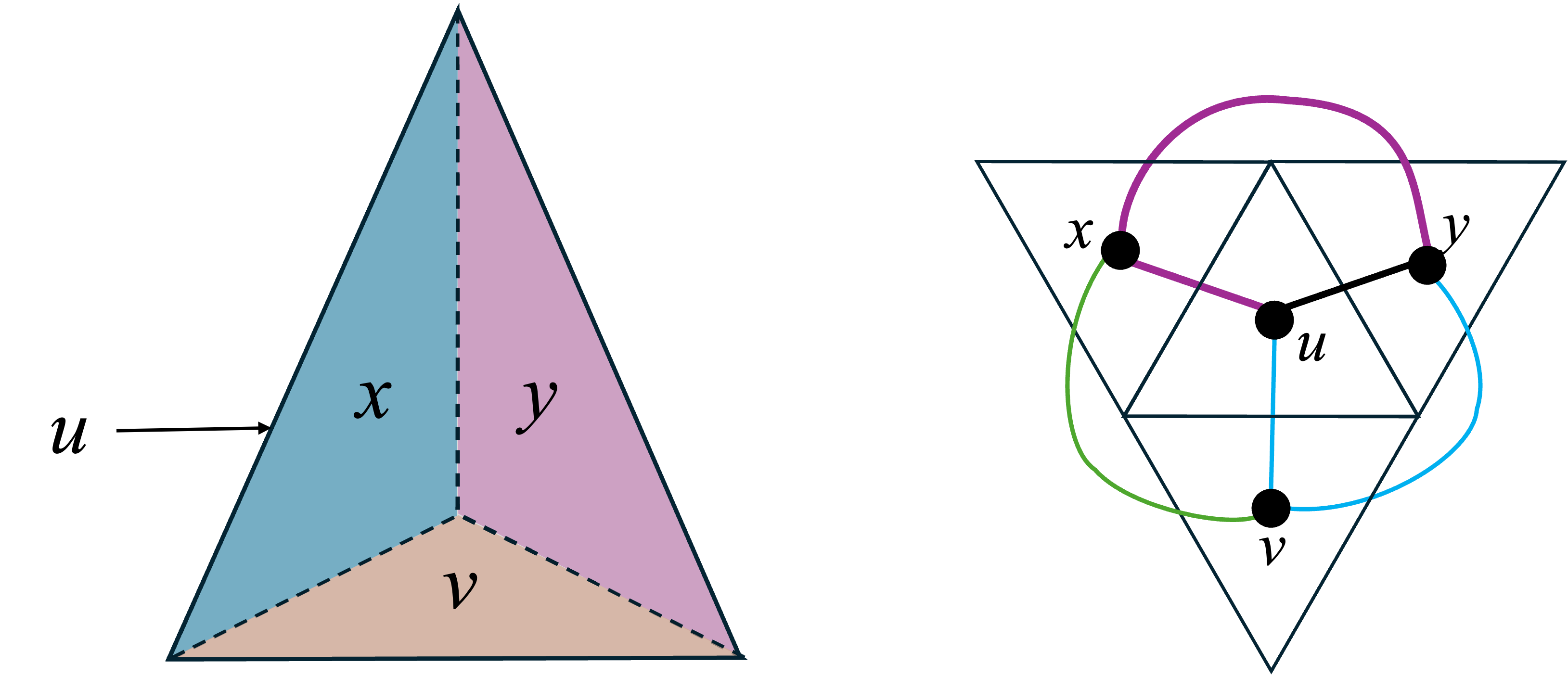}
    \caption[A tetrahedron.]{A tetrahedron and its facet graph. $u$ is the front-facing facet, $v$ is the bottom facet, while $x$ and $y$ are on the sides.}%
    \label{fig:3ctd-base}
\end{figure}

\subsection{Overview of Structured Graphs}\label{sec:structuredgraphs}

A structured graph is a graph which allows for grouping of some connected vertices into a combined vertex. These graphs are hierarchical and allow varying levels of abstraction. Structured graphs were first defined in \cite{defloriani1983,ancona1986,defloriani1989}, and we use the standard terminology and descriptions here.

Formally, using definitions established in \cite{defloriani1989}, a structured graph involves \emph{reducible components}, each of which are $G'=(V',E')$, connected subgraphs of the original graph $G$. Each vertex $v \in V'$ which is adjacent to a vertex $u\in V-V'$ is a \emph{boundary vertex}. The component $G'$ can be reduced to a single \emph{macrovertex} $x$ not already in $V$, thus replacing $G'$ in the graph. When $x$ is present in the graph, we call it the \emph{reduced graph} $G^R$. If $x$ is expanded, $G'$ is the \emph{expansion graph} of $x$. See Figure \ref{fig:reducing} for a visualization.

\begin{figure}[h]
    \centering
    \includegraphics[width=\columnwidth]{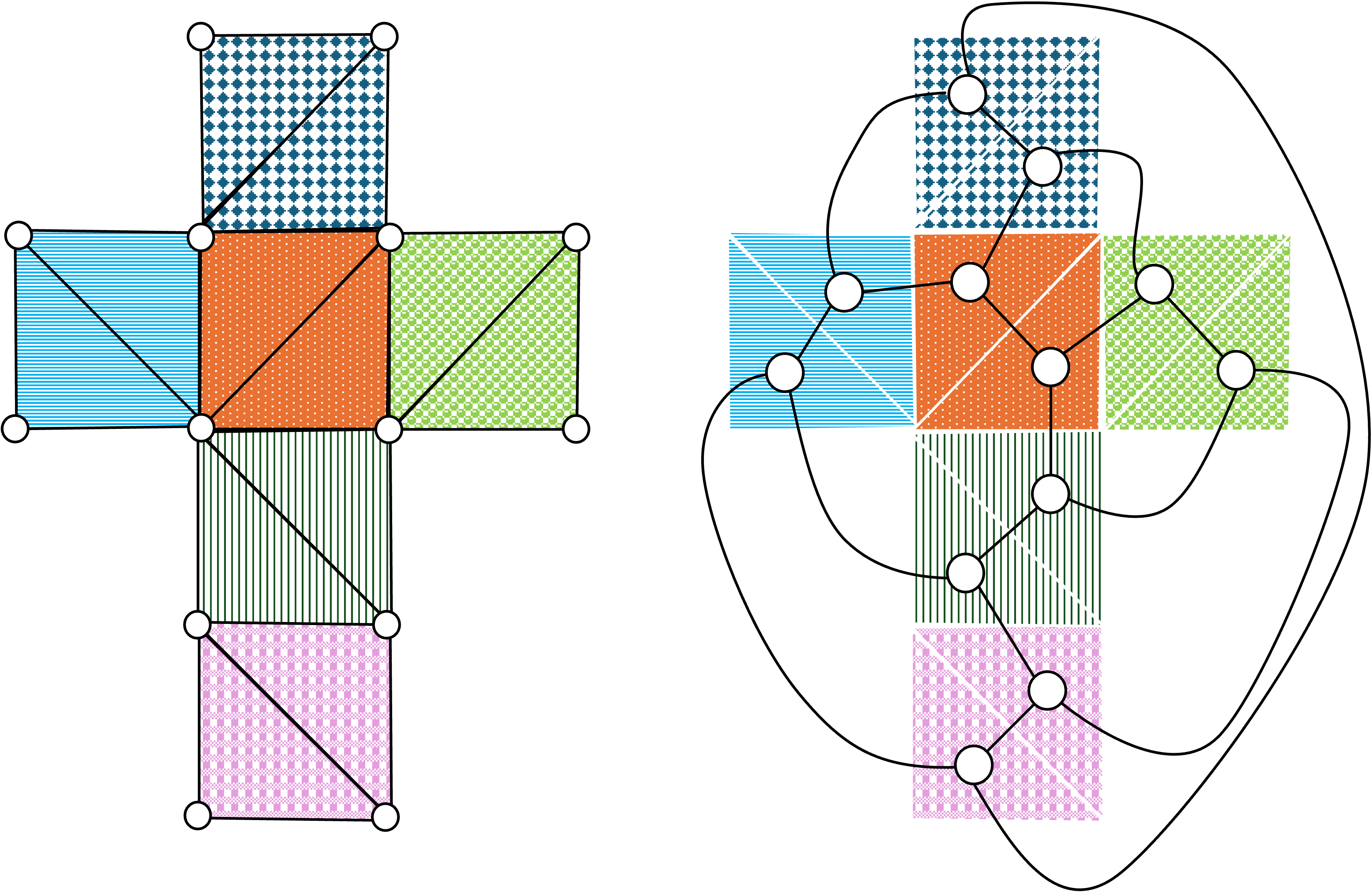}
    \caption{A depiction of a cube that has been laid flat. Each group of three vertices has been reduced to a macrovertex representing a single triangle. Since each macrovertex is adjacent to another triangle, every vertex point in the original mesh is a boundary vertex.
    }
    \label{fig:reducing}
\end{figure}

A helpful analogy to understand structured graphs is digital maps. When zoomed out on an area, only large cities are visible and named, and only major highways are shown. Zooming in reveals smaller cities and towns, as well as arterial roads. When a map is fully zoomed in, side streets are named and local landmarks are visible. This collapsing-expanding structure is a hallmark of structured graphs \cite{defloriani1983,ancona1986,defloriani1989,ancona1990,ancona1991,cai2005} and we employ it in SAMS graphs.
In the following section, properties of SAMS graphs will be described.

\subsection{From Structured Graphs to SAMS}\label{sec:definitions}
\subsubsection{Summary}
SAMS graphs consist of multiple levels of abstraction on a mesh. The levels include the mesh itself: the low-level representation of the mesh's vertices, edges, and triangles. We call the graph formed by the points and edges of the mesh the \emph{mesh graph}. The next level is the dual of the mesh graph, which we call the \emph{facet graph}. Each triangle in the mesh graph is reduced to a vertex. Edges of the facet graph are defined by triangle adjacency. The vertices of the facet graph may be clustered according to specified criteria of interest to the user. For example, the dihedral angle between facets can be computed from the triangles' normal vectors, and each angle can be classified using a threshold to determine which groups of triangles constitute surfaces in the mesh. Once the clustering is completed, the selected vertices are combined into a single vertex, and edges are drawn according to the boundary facets' edges to other clusters. We call the graph that results from this clustering the \emph{surface graph}. These three graphs describe the structure of the SAMS graph for a particular model. 

\subsubsection{Definitions}

Now the structure of SAMS is described formally. We present a framework with three levels: the mesh graph, facet graph, and surface graph. Figure \ref{fig:AMMA-graph} shows the current three steps of the framework.

\begin{figure}[h]
    \centering
    \includegraphics[width=\columnwidth]{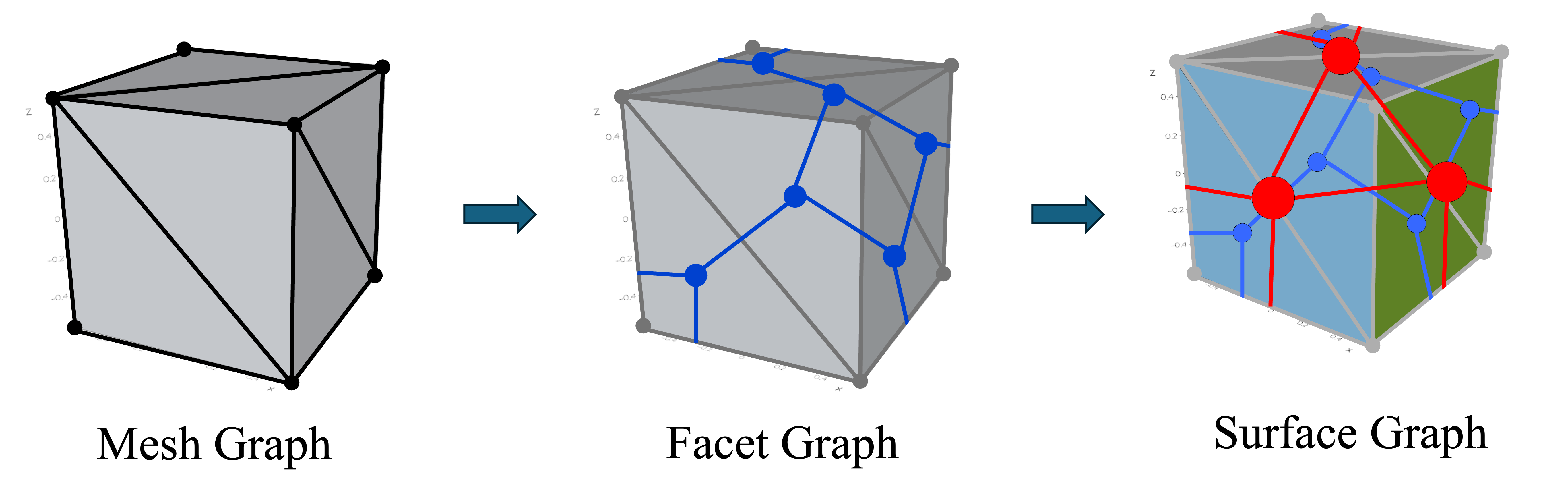}
    \caption{The three levels of the SAMS structured graph. First, the mesh graph, which is the raw mesh input, is constructed with vertices and edges in 3D space. Second, the facet graph, which is the dual of the mesh graph. Third, the surface graph reduces facets into clusters (the clustering is shown here by colors) into a single vertex. The reduced vertices are then joined by triangle adjacency.}
    \label{fig:AMMA-graph}
\end{figure}

To construct the \emph{mesh graph}, the vertex coordinates of the mesh are parsed from the STL file, three vertices at a time. Each vertex's coordinates is added to a set of graph vertices. Vertices are retained if their coordinates do not match a vertex already present in the set. Edges are inferred to exist between each of the three vertices.  In this way, the mesh graph $G_M = (V_M,E_M)$ is constructed as $$V_M = \{i, (x,y,z) | i \in\mathbb{Z}, (x,y,z)\in \mathbb{R}^3\}$$ and $$E_M = \{(u,v) | u, v \text{ are two vertices in the same facet}\}.$$

Vertices and edges alike are each assigned an index according to their order of appearance in the STL file. From this point forward, we avoid confusion and refer to vertices and edges derived from the mesh as \emph{mesh vertices} and \emph{mesh edges}. 

The natural structure of an STL file provides a simple transition to the next category of graph. Each facet is treated as a vertex in the \emph{facet graph}, defined as $G_F = (V_F,E_F)$ where the set of vertices is formed by the set of facets. In such a way, each facet is a macrovertex for each of the three mesh vertices; the subgraph formed by the 3-clique of mesh vertices has been reduced to a facet vertex. Formally, $$V_F = \{ v=(a,b,c) | a, b, c\in V_M \text{ and form a 3-clique} \}$$ and $E_F$ is defined by the boundary edges of the cluster: if a facet vertex $v$ is the macrovertex of mesh vertices $(a,b,c)$, and both $a$ and $b$ are adjacent to a third mesh vertex $d$, there is an edge between facet $v$ and the facet $u$ formed by vertices $(a, b, d)$. Therefore, \begin{align*}
E_F = \{(u,v) &| u=(a,b,c) \text{ and } v=(a,b,d), 
\\ &\text{ where } a,b,c,d\in V_M; ab, bc, ac, ad, bd \in E_M\}.
\end{align*}

Once the facet graph is constructed, facets may be clustered according to many possible heuristics. One such heuristic includes grouping triangles by similar area, or by the angle between any adjacent pair of facets. The following level of abstraction is modular, and can result in many possible graphs. We consider one type of clustering, grouping triangles based on similar angles. In such a way, we define the \emph{surface graph}, $G_{SURF} = (V_{SURF},E_{SURF})$, where $V_{SURF}$ is a macrovertex of selected facet graph vertices and is called a \textit{surface}. Each $v\in V_{SURF}$ is non-overlapping, meaning $v$ cannot be part of two surfaces at once, nor can a facet vertex be part of more than one surface. $E_{SURF}$ is constructed by the boundary edges between surfaces: If facet $f$ is contained in vertex $v\in V_{SURF}$, and facet $g$ is contained in vertex $u\in V_{SURF}$, and $f$ and $g$ are adjacent in the facet graph, then there exists an edge in the surface graph joining $(u,v)$. This is one possible structured level of SAMS graphs; the framework as proposed is modular and can be adapted for many different purposes. We detail the clustering algorithm in Section \ref{sec:surfacegraph}.

We now describe some interesting and important graph-theoretic properties of SAMS graphs.

\subsection{Graph-Theoretic Properties of SAMS Graphs}\label{sec:graphproperties}

With the SAMS framework defined, we now address various properties the structured graph can possess. For the sake of simplicity, we will proceed in this section assuming a 2-manifold mesh that has no open edges or singleton vertices.

\subsubsection{Shells}

A shell is a maximal group of connected facets in a mesh. Having more than one shell in a mesh file allows for multiple independent pieces to be printed at once, or can help define internal complexity, such as defining a hollow space enclosed in a part. This geometric concept corresponds directly to connected components in both the mesh and facet graphs. Since each shell is a distinct structure which is independent from other shells by a nonzero distance, there will not be a path from mesh vertices or facets of one particular shell to those vertices or facets of another shell. This allows for shells to be examined independently, each as its own component of the overall part.

\subsubsection{Degree and Connectivity}
Now we investigate the degrees of facet graphs.

\begin{thm}\label{thm:cubic}
    Every well-defined facet graph is a cubic graph.
\end{thm}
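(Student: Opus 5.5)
The plan is a counting argument on the edges of a single facet. I will use the standing assumption of Section~\ref{sec:graphproperties}: a ``well-defined'' facet graph comes from a closed 2-manifold mesh with no open edges and no singleton vertices. In particular, every mesh edge is shared by exactly two distinct facets. The STL condition of Section~\ref{sec:STL} gives at most two; the absence of open edges gives at least two.

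First I fix a facet vertex $v=(a,b,c)\in V_F$. Its three mesh vertices are distinct, so $v$ has exactly three distinct mesh edges $ab$, $bc$, $ac\in E_M$. By the definition of $E_F$, a facet $u$ is adjacent to $v$ exactly when $u=(a,b,d)$ shares two mesh vertices, and hence one mesh edge, with $v$, with $d\neq c$.

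Next I define a map $\phi$ from the three edges of $v$ to its neighbours in $G_F$, sending each edge to the unique other facet containing it. By the manifold assumption, each edge of $v$ lies in exactly one facet other than $v$, so $\phi$ is well defined. Every neighbour of $v$ shares some edge with $v$, so $\phi$ is surjective onto $N(v)$, and therefore $\deg(v)\le 3$.

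The main obstacle is injectivity: no single facet $u$ may share two edges with $v$. Suppose $u$ shared both $ab$ and $bc$. Then $u$ contains all of $a,b,c$, so $u=(a,b,c)=v$ as a vertex set. This means either $u=v$, or two facets of the mesh have the same vertex set, i.e.\ a duplicated facet. I will argue that duplicated facets are excluded for one of two reasons. First, the mesh graph construction merges coincident coordinates and $V_F$ is a set of 3-cliques, so a duplicate is not a separate element of $V_F$. Second, two coincident facets would make each of their edges shared by at least three facets, contradicting the manifold assumption, unless the mesh is a degenerate two-triangle ``pillow'', which is not well defined. I will state this exclusion explicitly as part of what ``well-defined'' means. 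Similarly, self-loops cannot occur, since $E_F$ requires $d\neq c$.

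With injectivity in hand, $\phi$ is a bijection and $\deg(v)=3$. Since $v$ was arbitrary, every facet vertex has degree $3$, so $G_F$ is 3-regular, i.e.\ cubic. As a sanity check, the tetrahedron of Figure~\ref{fig:3ctd-base} gives $K_4$, which is cubic.
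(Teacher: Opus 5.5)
Your proof is correct and follows the same route as the paper's: each facet has three mesh edges, each edge is shared by exactly two facets under the 2-manifold assumption, hence each facet has three neighbours. Your injectivity step also justifies a claim the paper only asserts, namely that these three neighbours are distinct. One small caution: rely on your second reason for excluding duplicate facets rather than the literal ``$V_F$ is a set of 3-cliques'' reading, because under that reading non-facial triangles of $G_M$ (e.g.\ the equatorial triangle of a triangular bipyramid) would also be vertices of $G_F$, and your bound $\deg(v)\le 3$ would fail.
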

\begin{proof}
    We assume the facet graph is obtained from a 2-manifold mesh. The property assures that every edge of the mesh graph is shared by exactly two unique facets. Thus, every facet in the mesh will have three unique neighbors. Since the facet graph is the dual of a mesh graph in which every face is a triangle, thus every vertex of the facet graph will have degree three. Therefore, the facet graph is cubic. 
\end{proof}

\begin{thm}\label{thm:ctcy}  
Every facet graph is at least two-connected.
\end{thm}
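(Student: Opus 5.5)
We need to show that the facet graph of a 2-manifold mesh, taken one shell at a time, has no cut vertex. As in Theorem \ref{thm:cubic}, we work with a closed 2-manifold triangle mesh with no open edges or singleton vertices. Each shell is a connected component of the facet graph, so "two-connected" is understood componentwise, with each shell treated separately.

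\textbf{Step 1 (connectivity).} By the definition of a shell, the facets of a shell are connected through shared edges. So each component of $G_F$ is connected. It has at least four vertices, since the smallest closed triangle mesh is the tetrahedron of Figure \ref{fig:3ctd-base}.

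\textbf{Step 2 (no cut vertex).} Fix a facet $f=(a,b,c)$ and let $g_1,g_2,g_3$ be its three neighbours. These exist by Theorem \ref{thm:cubic}: $g_1$ shares $ab$, $g_2$ shares $bc$, and $g_3$ shares $ca$. Every path through $f$ enters and leaves via two of the $g_i$. Hence, if $G_F-f$ stays connected, it suffices that $g_1,g_2,g_3$ lie in one component of $G_F-f$. We connect them through the vertex links.

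\begin{itemize}
\item Because the mesh is a 2-manifold, the facets incident to the mesh vertex $b$ form a closed cyclic fan (the link of $b$).
\item Consecutive facets in this fan share an edge through $b$, so the fan is a cycle in $G_F$.
\item The facet $f$ appears once in this cycle, flanked by $g_1$ (across $ab$) and $g_2$ (across $bc$).
\item Deleting $f$ from the cycle leaves a path from $g_1$ to $g_2$ avoiding $f$.
\end{itemize}

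The same argument at $c$ joins $g_2$ to $g_3$. So all three neighbours of $f$ lie in a single component of $G_F-f$.

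\textbf{Step 3 (other vertices).} Any other facet $h$ of the shell had a path to $f$ in $G_F$. Truncate that path at its first arrival at some $g_i$; the truncated path avoids $f$. Therefore $G_F-f$ is connected, and since $f$ was arbitrary, $G_F$ is 2-vertex-connected.

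\textbf{Main obstacle.} The hardest step is justifying that the link of every mesh vertex is a single cycle of facets of length at least three, and that $f$ occurs only once in it. This is precisely the vertex-manifold condition. Edge-manifoldness alone (each edge shared by exactly two facets) allows two fans to be pinched at one vertex. In that case the local cycle argument fails, although a global path might still exist.

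I would handle this by interpreting "2-manifold" as including the vertex condition, i.e. a disk neighbourhood at each vertex, which the standing assumption supports. I would also note the following: because the graph is cubic, a vertex cut of size one forces a bridge, since one of the three edges at the cut vertex is alone on its side. A fallback is therefore to argue that every edge of $G_F$ lies on a fan cycle and hence is not a bridge.

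Degeneracies such as a length-two fan, where two facets share two edges, cannot occur under the paper's definitions. This is because facets are determined by distinct vertex triples and edges are sets rather than multisets.
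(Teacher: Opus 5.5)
Your proof is correct and rests on the same key idea as the second half of the paper's proof, namely that the facets around each mesh vertex form a cycle in the facet graph. It also supplies the step the paper dismisses as ``clear to see'' (joining the three neighbours of $f$ through the fans at $b$ and $c$, then routing every other facet to one of them), and rightly does not rely on the paper's opening contradiction argument, which only shows that one particular cubic graph is not a facet graph.

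One correction to your ``main obstacle'': the vertex-manifold condition is not needed, because with each edge in exactly two facets and facets being distinct triples, the facets at $b$ form a disjoint union of cycles of length at least three, and $g_1$, $g_2$ lie on the same cycle as $f$. So your local argument, and likewise your bridge fallback, go through unchanged at pinched vertices.
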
  
\begin{proof}  
We prove this by contradiction. So every facet graph must be at most one-connected. Consider the graph at the top of Figure \ref{not1ctd}. It is a cubic graph and one-connected, but attempting to construct a mesh from the graph vertices results in an impossible structure according to our assumptions.   
Looking at facets E and D, the point of D's triangle must go either to the point containing B, C, G, H, E, F, or it goes to the point containing only E, F, and I. Choosing one point aligns D's second edge, but choosing either point results in the remaining edge being open, as D cannot have more than three points. Thus the graph cannot be a facet graph, and there is at least one facet graph that is not one-connected, a contradiction. 

\begin{figure}[h]
    \centering
    \includegraphics[width=\columnwidth]{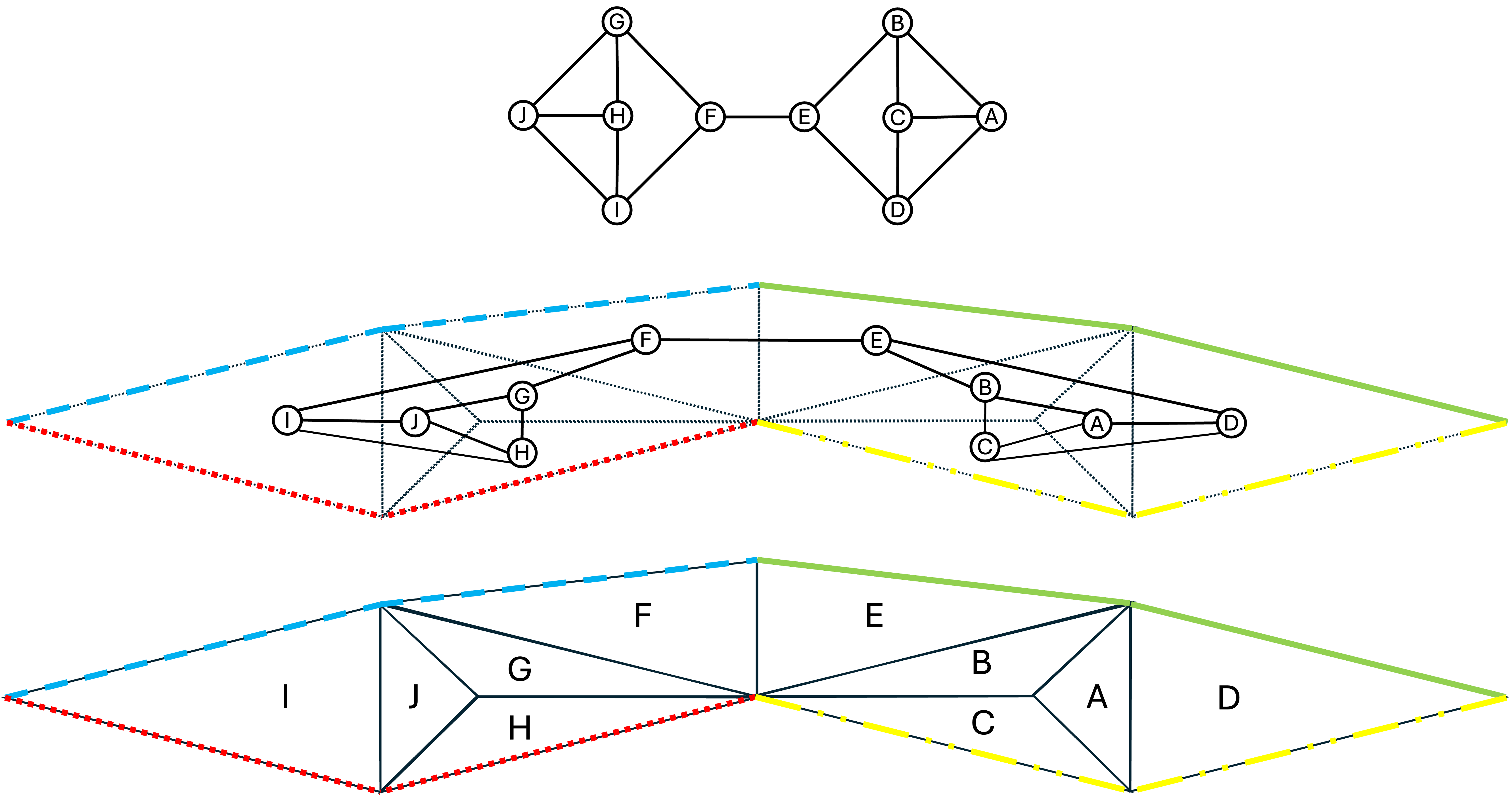}
    \caption[A one-connected cubic graph which is not a facet graph.]{Top: A one-connected cubic graph which is not a facet graph. Middle: The graph overlaid on a possible configuration of its triangles. Bottom: The configuration of triangles without the graph. Colored and dashed edges will be adjacent when the object is folded.}
    \label{not1ctd}
\end{figure}

Moreover, consider the structure of a two-manifold, three-dimen\-sional object. Each mesh vertex represents the joining of several facets. Each vertex of the mesh therefore will correspond to a cycle in the facet graph, by definition of the dual. It is clear to see that an induced subgraph in the facet graph of each corner of the mesh will be two-connected. And since the mesh is assumed to have no holes or missing triangles, every corner creates an induced cycle. Therefore the facet graph must be at least two-connected. 
\end{proof}

\subsubsection{Genus and Planarity}
For simplicity, we will proceed considering only single-shell objects.
Mesh graphs are analogous to polyhedral graphs \cite{tutte1963}, which are formed by the vertices and edges of convex, three-dimensional polyhedra. This graph is known to be planar. 
By embedding the vertices and edges of the convex hull of the mesh or the facet graph on a convex object like a sphere, it is possible to see that the mesh and facet graph of an object will be planar iff the genus of the object is 0. 
The \textit{genus} of a 3D object, intuitively, is the number of holes an object possesses. The genus stays the same if the object is mutated, molded or modified to have any shape, as long as any holes through the object are retained. Examples of objects with a variety of genera can be found in Figure \ref{fig:mesh-genus}.

\begin{figure}[h]
    \centering
    \includegraphics[width=\columnwidth]{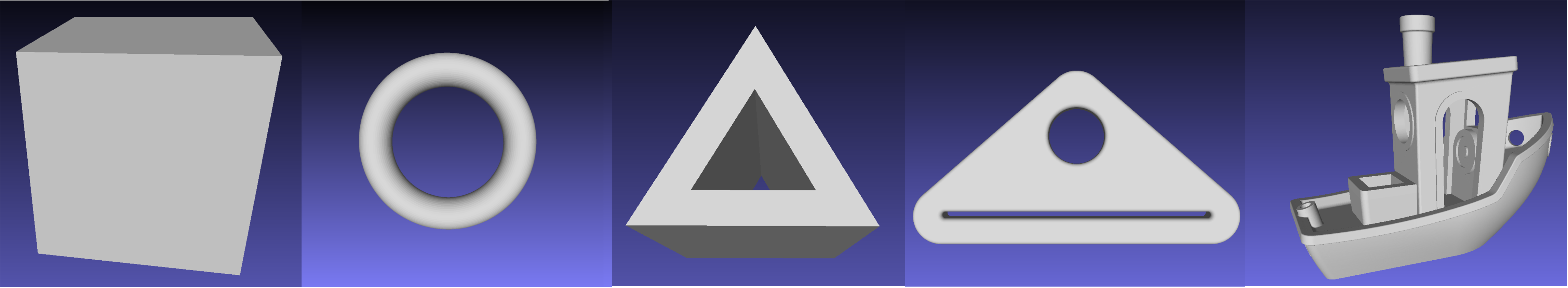}
    \caption{Objects with genus $\ge 0$. From left to right: a solid cube, genus 0; a ring, genus 1; a triangular tube, genus 1; a toothpaste tube holder, genus 2; a toy boat ``Benchy", genus 7.}
    \label{fig:mesh-genus}
\end{figure}

If the genus of the object is greater than 0, it cannot be embedded on the sphere, and must be topologically equivalent to the compact sum of tori and disks \cite{gallier2012}. One such graph that must be embedded on a torus to be planar is the utility graph, $K_{3,3},$ the complete bipartite graph on 6 vertices. This raises the question of whether the facet and mesh graphs in general are necessarily planar. 

\begin{lemma}\label{not-every-fg-planar}
    Not every facet graph is planar.
\end{lemma}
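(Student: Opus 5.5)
The plan is to exhibit one explicit facet graph that is non-planar. A facet graph comes from a 2-manifold triangle mesh of positive genus, and the natural candidate is a triangulated torus. The simplest choice is the minimal 7-vertex triangulation of the torus (the M\"obius--Cs\'asz\'ar torus). Its mesh graph is the complete graph $K_7$, and it has $14$ triangles and $21$ edges. Its facet graph is therefore a cubic graph on $14$ vertices; this is cubic by Theorem \ref{thm:cubic}, and it is in fact the Heawood graph. The Cs\'asz\'ar polyhedron realizes this triangulation in $\mathbb{R}^3$ without self-intersection, so it is a genuine mesh satisfying the standing assumptions of Section \ref{sec:graphproperties}.

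The key steps, in order, are as follows.

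\textbf{Step 1: a valid mesh.} We write down the triangulation combinatorially. Take vertices $\mathbb{Z}_7$ and triangles $\{i,i+1,i+3\}$ and $\{i,i+2,i+3\}$ for $i\in\mathbb{Z}_7$. We then check that every edge lies in exactly two triangles, so the mesh is a closed 2-manifold.

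\textbf{Step 2: the facet graph.} We compute the facet graph from the definition of $E_F$. Two triangles are adjacent exactly when they share a mesh edge. This gives a 14-vertex cubic graph with 21 edges.

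\textbf{Step 3: non-planarity.} Here there are two routes.

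The first route is a counting argument. The facet graph is bipartite: the two triangle families $\{i,i+1,i+3\}$ and $\{i,i+2,i+3\}$ form the two colour classes, since each shared edge joins one triangle of each type. Its girth is $6$. If it were planar with $n=14$ and $m=21$, Euler's formula would give $f=m-n+2=9$ faces. Every face has length at least $6$, so $2m\ge 6f$, i.e.\ $42\ge 54$, a contradiction. Hence the graph is not planar.

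The second route, as a backup, is to exhibit a subdivision of $K_{3,3}$ and invoke Kuratowski's Theorem as reviewed in Section \ref{sec:graph-terminology}.

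The main obstacle is Step 3's girth claim. Ruling out $4$-cycles requires a short case check. Two triangles of one type sharing two common neighbours of the other type would force a repeated edge pattern in $\mathbb{Z}_7$, and we verify this cannot happen using the difference sets $\{1,2,3\}$. If that check becomes messy, we fall back on the weaker bound of girth at least $4$ (bipartite). That bound gives $2m\ge 4f$, i.e.\ $42\ge 36$, which yields no contradiction. In that case we would instead find an explicit $K_{3,3}$ subdivision by hand, choosing three triangles of each type and routing the connecting paths around the torus.

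Finally, we remark that the argument shows more generally that facet graphs of genus-$g\ge1$ meshes need not be planar. This matches the informal claim preceding the lemma.
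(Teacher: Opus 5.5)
Your argument is correct, but it takes a different route from the paper. The paper uses a natural printable part: a triangular prism with a triangular tunnel through it (genus $1$). It draws that part's facet graph and certifies non-planarity by exhibiting a subdivision of $K_{3,3}$ (the bold edges of a figure), i.e.\ via Kuratowski. You instead take the minimal $7$-vertex torus, realized by the Cs\'asz\'ar polyhedron, whose facet graph is the Heawood graph. You then certify non-planarity by an Euler count from bipartiteness and girth $6$. The girth check is easier than you fear. With $A_i=\{i,i+1,i+3\}$ and $B_j=\{j,j+2,j+3\}$ one finds $N(A_i)=\{B_i,B_{i+1},B_{i+5}\}$. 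The offset set $\{0,1,5\}$ (rather than the edge-difference set $\{1,2,3\}$) is a perfect difference set modulo $7$, so two distinct $A$'s share exactly one neighbour and there is no $4$-cycle.

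Your route buys a certificate that can be checked entirely from the text, with nothing read off a picture. It costs an appeal to an external realizability theorem (Cs\'asz\'ar's embedding), and it is tied to a very special triangulation. A mesh vertex of degree $d$ gives a $d$-cycle in the facet graph, and torus triangulations have average degree $6$, so girth $6$ forces every mesh vertex to have degree exactly $6$. The paper's Kuratowski approach, which is your fallback, works for any non-planar facet graph and leads directly into Theorem~\ref{thm:k5}.

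Three minor points.
\begin{itemize}
\item State explicitly that $V_F$ is the set of the $14$ facets. The paper's formula describes $V_F$ through $3$-cliques of $G_M$, and here $G_M=K_7$ has $35$ of them.
\item In Step 1, the edge condition alone does not rule out pinched vertices. Add that the link of $0$ is the hexagon $1,3,2,6,4,5$, and use the symmetry $i\mapsto i+1$ for the other vertices.
\item Your closing remark about every genus $g\ge 1$ is not shown by a single genus-$1$ example. It does follow if you attach all extra handles inside one facet. The facet graph then contains the Heawood graph minus a vertex ($13$ vertices, $18$ edges, girth $6$), and Euler would give $7$ faces with $2\cdot 18<6\cdot 7$.
\end{itemize}
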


\begin{proof}
    This can be seen with an object like a triangular tube. It is a triangular prism with a cutout of a smaller triangular prism that goes through the entire length of the object. Figure \ref{fig:tritube} shows the object first as seen in a mesh-viewing program, then its facets drawn and numbered, and finally its facet graph. The graph shown is nonplanar. The resulting facet graph has a {subdivision} of $K_{3,3}$, which is the subgraph consisting of the bold edges in Figure \ref{fig:tritube-subdiv}. 
Therefore, there exists a mesh with a non-planar facet graph.
\end{proof}

\begin{figure}[h!]
    \centering
    \includegraphics[scale=0.25]{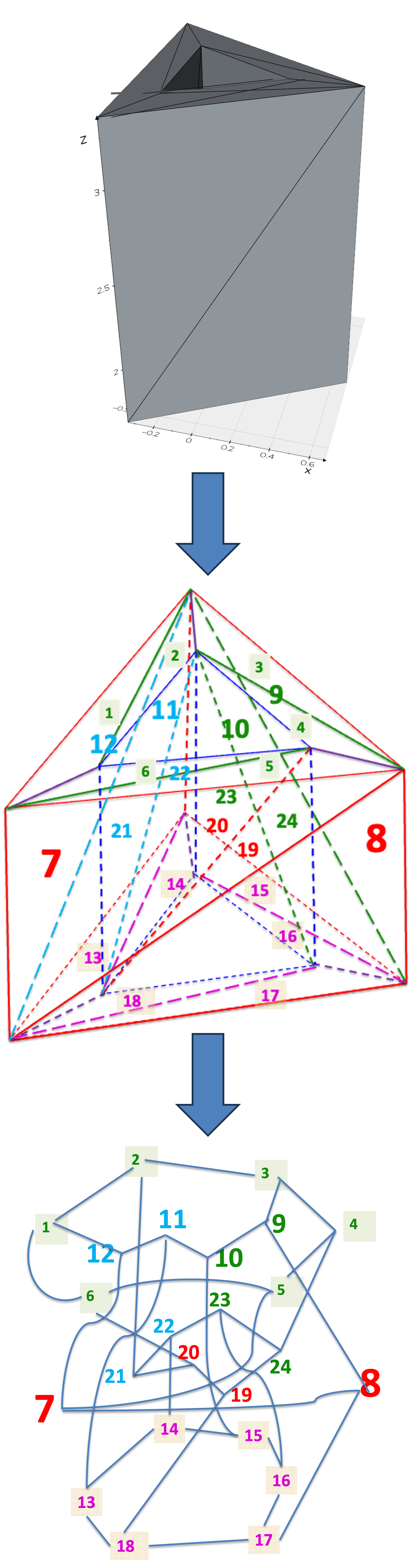}
    \caption{A triangular tube (top), its facet triangulation (middle), and its facet graph (bottom).}
    \label{fig:tritube}
\end{figure}

\begin{figure}
    \centering
    \includegraphics[width=\linewidth]{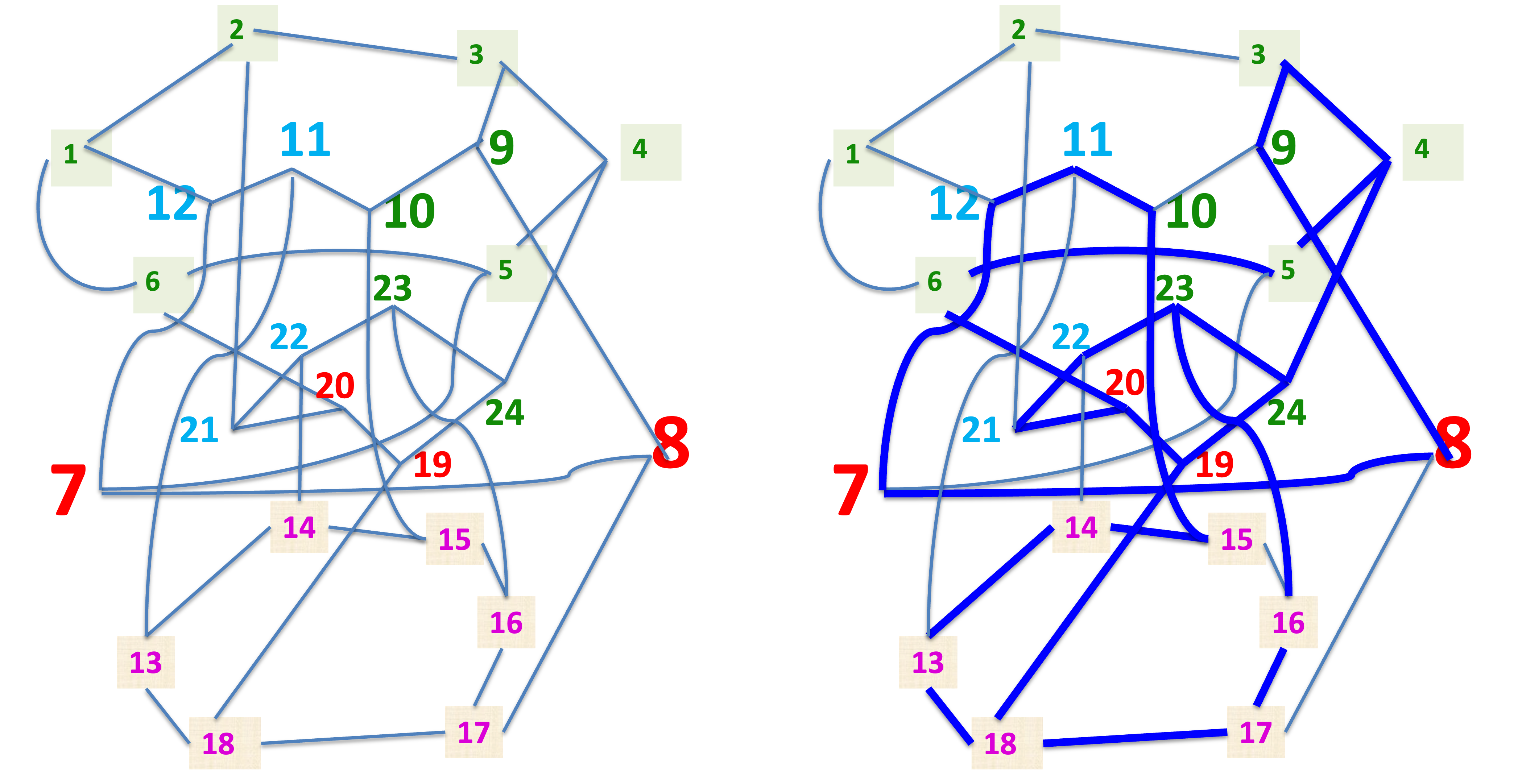}
    \caption{The original facet graph of the triangular tube (left), and its subdivision emphasized in darker edges (right).}
    \label{fig:tritube-subdiv}
\end{figure}

Kuratowski's theorem \cite{tutte1963} states that any non-planar graph has either a subdivision of $K_{3,3}$ or $K_5$. Having found an object with a $K_{3,3}$ subdivision, let us briefly examine the possibility of having a $K_{5}$ subdivision.

\begin{thm}\label{thm:k5}
    $K_5$ or any subdivision of it cannot be a facet graph, or a subgraph of a facet graph. 
\end{thm}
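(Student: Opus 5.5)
The plan is a degree-counting argument built on Theorem \ref{thm:cubic}. Under the standing assumption of this section (a 2-manifold mesh with no open edges or singleton vertices), Theorem \ref{thm:cubic} says every facet graph $G_F$ is cubic, so $\deg_{G_F}(v)=3$ for every $v\in V_F$. The strategy is to show that any subdivision of $K_5$ must contain vertices of degree $4$. A cubic graph cannot contain such a vertex, either as the whole graph or inside a subgraph.

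First I would record the basic monotonicity fact from the definition of subgraph in Section \ref{sec:graph-terminology}. If $H=(V',E')\subseteq G=(V,E)$, then $E'\subseteq E$, so for every $v\in V'$ we have $\deg_H(v)\le \deg_G(v)$. Hence every subgraph of a facet graph has maximum degree at most $3$. This also covers the case where the graph equals the whole facet graph.

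Next I would analyse the degrees in a subdivision $S$ of $K_5$. A subdivision replaces each edge of $K_5$ by a path, inserting new internal vertices of degree $2$. The original five \emph{branch} vertices keep their four incident edges, now the first edges of the four paths leaving them. These first edges are distinct, because the paths are internally disjoint. So each branch vertex has $\deg_S(v)=4$, and the case with no subdivision at all is $K_5$ itself, which is $4$-regular. Combining the two steps, any subgraph of $G_F$ isomorphic to $S$ would contain a vertex of degree $4$ in $G_F$. This contradicts Theorem \ref{thm:cubic}, and the theorem follows.

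I expect the main obstacle to be scope rather than the argument itself. The claim must be read as being about subgraphs and subdivisions, not minors. A cubic graph can contain a $K_5$ minor, for example in facet graphs of triangulated tori, so the proof should stress that it relies on the subdivision and subgraph notions used in Kuratowski's theorem. I would also state explicitly that the result inherits the 2-manifold hypothesis of Theorem \ref{thm:cubic}. As a corollary, any non-planar facet graph, such as the one in Lemma \ref{not-every-fg-planar}, must contain a subdivision of $K_{3,3}$.
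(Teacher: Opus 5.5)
Your proposal is correct and follows the same route as the paper. Both arguments invoke Theorem~\ref{thm:cubic} and observe that $K_5$, and every subdivision of it, has five branch vertices of degree $4$, which cannot occur in a cubic graph or in any of its subgraphs. You also make explicit two facts the paper leaves implicit: degrees cannot increase when passing to a subgraph, and branch vertices keep degree $4$ because the subdivided paths are internally disjoint.
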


\begin{proof}
    $K_5$ is the complete graph on five vertices, and thus each vertex has degree four. By Theorem \ref{thm:cubic}, the facet graph is cubic. Therefore, $K_5$ cannot be a facet graph on its own or a subgraph of a facet graph. 
    Moreover, a subdivision of $K_5$ being present in a graph requires that there be 5 vertices with at least degree 4. This is an impossible construction for a facet graph for the same reason. Therefore, it is not possible that $K_5$ is a facet graph, a subgraph, or subdivision of a facet graph.
\end{proof}

The notion of a geometric dual applies to non-planar graphs with the assistance of matroid theory \cite{whitney1932}.  A non-planar graph may also be embedded on a surface such that it becomes planar \cite{diestel2025}. We have shown that it is possible for the facet graph to be non-planar. We still choose to apply the term \enquote{dual} to the facet graph, as the notion of mapping each mesh facet to a facet vertex remains relevant. 
To the best of our knowledge, a 3D mesh's planarity properties have not been previously discussed in the literature. We link the genus of an object to the planarity of its facet graph.

\subsubsection{Convexity and Concavity}\label{sec:concavity}

Mesh graphs may be considered analogous to polyhedral graphs \cite{tutte1963}, which are formed by the vertices and edges of convex, three-dimensional polyhedra. 
A critical difference is that the polyhedron from which we obtain the polyhedral graph must be convex. However, one may be transformed into the other by reducing the SAMS mesh graph to its convex hull. The SAMS mesh graph may be obtained in the other direction by adjusting vertex coordinates from the convex hull to their original positions. For this reason, we posit that SAMS graphs of genus 1 are an extension of polyhedral graphs. The polyhedral graph has been proven to be 3-vertex-connected \cite{balinski1961} which supports our conclusion in Theorem \ref{thm:ctcy}.

In the implementation of SAMS, each vertex is classified as concave, convex, planar, or other, with respect to its local positioning. To do so, first the \emph{vertex normal} is computed as the sum of all face vectors incident to a mesh vertex. To compute concavity, the mesh vertex's incident edges are converted to vectors by subtracting the coordinate of the current mesh vertex from the coordinates of adjacent mesh vertices. Then, the angle between the vertex normal and each ``edge vector" is added to a list. The list of angles is classified as either convex (all greater than $90^{\circ}$), concave (all less than $90^{\circ}$), planar (all equal to $90^{\circ}$), or other.

\subsection{Surface Graph: a Reduction of the Facet Graph}\label{sec:surfacegraph}

The facet graph is a well-utilized reduction of the mesh graph in the literature, if not referred to in structured graph language as described here. The facet graph may be further reduced by a variety of algorithms to analyze the mesh in coarser focus. This can allow for mesh segmentation algorithms which do not require computationally heavy algorithmic processes such as convolutional or graph neural networks. We propose one such algorithm in this section. We choose to reduce the facet graph by clustering one or more triangles into \emph{surfaces}. Surfaces may consist of flat, planar collections of triangles, such as the sides of a cube, or they may consist of continuous curves with small angles between the triangles as in a sphere or the sides of a vase. 
One surface is distinguished from another by magnitude of the angle between them. If the angle between two triangles is sharp, they are clustered separately. If the angle is more subtle, the clusters are joined. The sharpness of the angle is arbitrary and determined by the user.

We accomplish a clustering of the entire object using an intuitive, deterministic algorithm which is linear on the order of edges in the facet graph. As a result, the {surface} graph is obtained. The vertices of the surface graph are the clusters of triangles; an edge in the surface graph represents two distinct surfaces being contiguous. Surfaces themselves, while holes may be present, contain triangles which are all contiguous: every triangle in the surface must share a facet edge with at least other facet in the same surface, unless the surface consists of only once facet. In this way, every facet vertex must be part of a unique surface macrovertex.

To construct the graph, we employ Algorithm \ref{alg:partitionToSurface}.
The mesh and a static threshold value are the inputs to the algorithm. Initially, every facet is contained in its own cluster, so $|V_{SURF}| = |V_F|$. Iterating through the set of facet edges, we compare the angle between two facets. If the angle is below the threshold, the two clusters are joined by merging the smaller-sized cluster to the larger one. Each facet in the smaller cluster is assigned to be part of the larger partition. Alternatively, if the angle is above the threshold, no action is taken.

Once every edge has been processed, the output of the comparison is a map of cluster IDs to facet vertices. The cluster IDs are the vertices of the surface graph. An additional iteration of the edges is required to determine the edges of the surface graph: for each facet edge, consider whether the incident vertices are part of different clusters. If they are, then the facets are boundary facets of the surface, and there is an edge between these two clusters in the surface graph $G_{SURF}$. 

The user-set threshold greatly impacts the resulting surfaces. While for simple objects, such as a cube, the threshold between a new cluster is necessarily stark (90$^\circ$), but for other objects it may be more subtle.

The runtime of the algorithm is $O(E_F)$, since all edges of the facet graph will be processed. This is much faster than a traditional graph traversal such as BFS, which would be $O(V_F+E_F)$. The proposed algorithm is only concerned with angles between individual vertices, so no additional information is needed.

\begin{algorithm}
\caption{PartitionByAngles(design, threshold)}
\label{alg:partitionToSurface}
\small
\KwData{Facet graph and an angle threshold}
Create an empty dictionary mapping partitionID to the facets contained in that partition\;
Assign each facet to its own partition\;
\For{every edge $e$ in design}{
    Obtain $e$'s two facets, $f$ and $h$\;
    Determine the angle between $f$ and $h$\;
    \If{the angle between facets $f$ and $h$ is less than $threshold$}{
            Check size of $f$'s and $h$'s partitions\;
            Add both $f$ and $h$ to the larger partition if not currently present\;
            Remove $f$ and $h$ from previous partition, if it was in a different one\;
        }
    \Else{Do nothing\;}
}
\end{algorithm}

The ability to accomplish this operation deterministically has important implications. The surface graph may be used to re-triangulate each surface, as in \cite{defloriani1989}. It can also be used to help ensure geometric properties of the digital models. Section \ref{sec:usecases} provides more information about the utility of the surface graph and {\sc SAMS} as a whole.

\section{Implementation and Evaluation}\label{sec:implementation-eval-results}

The SAMS framework has been implemented and assessed for accuracy. This section describes its implementation strategy (Section \ref{sec:implementation}), evaluation process (Section \ref{sec:evaluation}), and results (Section \ref{sec:results}).

\subsection{Implementation of SAMS}\label{sec:implementation}

SAMS graphs were implemented in Python classes. While our implementation does not rely on any existing mesh processing software, it may be used to interface with visualization libraries, such as Vedo \cite{vedo}. The class levels are structured beginning with \texttt{Mesh}. From there, abstractions are possible to \texttt{FacetGraph} and \texttt{SurfaceGraph}.

In addition to coordinates and normal vectors found in an STL file, our graph classes include additional computed properties that allow for other geometric operations. These properties include the following:

\begin{itemize}
    \item \textit{Vertex normal}: sum of the normal vectors of a vertex's incident facets.
    \item \textit{Concavity}: either concave, convex, planar, or other and computed based on adjacent normal vectors, as described in Section \ref{sec:concavity}.
    \item \textit{Area}: of a facet. Stored at mesh and facet level.
    \item \textit{Centroid}: the geometric center of the triangle, computed by 
    the average of each vertex's coordinates. %
    \item \textit{Length of edge}: computed by Euclidean distance.
    \item \textit{Adjacent angles}: the angle between two normal vectors.
\end{itemize}

At each level of the graphs, references are employed between objects that are stored in the level below. As graph levels are traversed, many properties are retained across the implementation. Thus, to save memory and avoid re-creating objects, aliases are used. As a result, it is simple to refer to the reduced surface graph from the mesh level, and vice versa. This enables ease of implementation for accessing low-level details that may be needed in higher-level computations. A \texttt{Mesh} has three associated classes: \texttt{MeshVertex}, \texttt{MeshEdge}, and \texttt{MeshFace}. Each type of graph, \texttt{FacetGraph} and \texttt{SurfaceGraph}, has three associated classes: \texttt{Vertex}, \texttt{Edge}, and \texttt{Graph}. 

\subsubsection{Implementation of Mesh Graph}

\begin{figure}[h]
    \centering
    \includegraphics[width=\columnwidth]{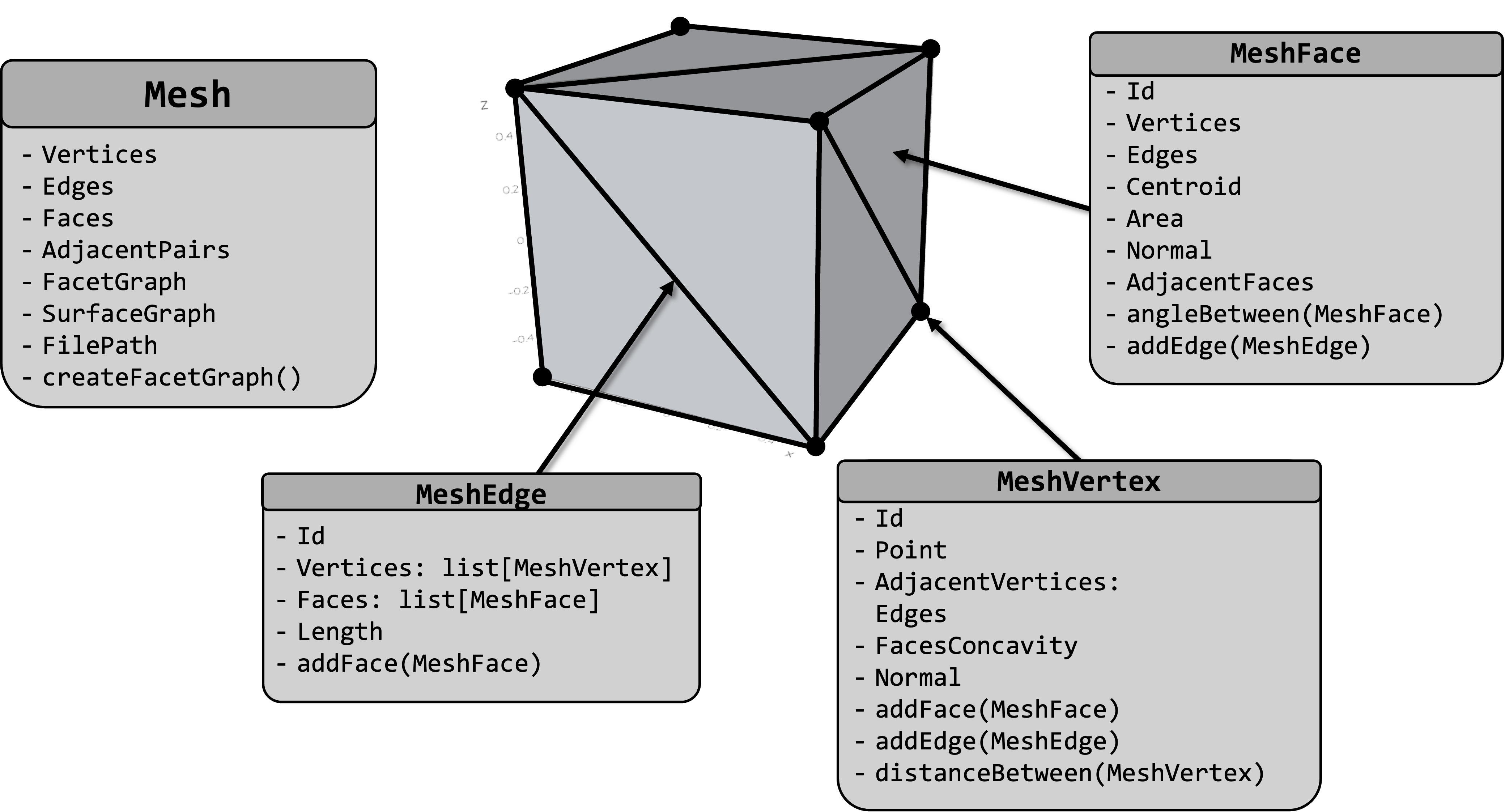}
    \caption{Description of data structures for our \texttt{Mesh} classes.}
    \label{fig:meshUML}
\end{figure}

We have implemented a mesh class (Figure \ref{fig:meshUML}) which parses and reads the triangle data within an STL mesh file and then constructs a \texttt{Mesh} object. From here, we can construct our related classes. 

\texttt{MeshFace} contains information about a face of the mesh. This information includes: a list of pointers of \texttt{MeshVertex} and \texttt{MeshEdge} objects the \texttt{MeshFace} is a part of. It also includes geometric information such as the centroid, total area, and the face in question's normal vector.

The \texttt{MeshVertex} object contains a list of references to the objects of \texttt{MeshFace} and \texttt{MeshEdge} it is part of, and adjacent to. The class also contains geometric properties including the 3D point coordinates, a vector called the vertex normal, and a measure of the vertex's concavity.

\texttt{MeshEdge} represents the line segment connecting two \texttt{MeshVertex} objects. It has the properties length (Euclidean distance between two points), the list of \texttt{MeshVertex} and \texttt{MeshFace} it joins, as well as the angle between the faces. When parsing the file, \texttt{MeshEdges} are computed based on vertices in a face. Based on this, there is no standard condition that an edge should have more or less than two \texttt{MeshVertex} objects.

Finally, the \texttt{Mesh} class holds lists of \texttt{MeshVertex}, \texttt{MeshEdge}, and \texttt{MeshFace} objects created from the STL file. It also contains references to \texttt{FacetGraph}, and \texttt{SurfaceGraph} objects that are built upon the mesh. 

\subsubsection{Implementation of Facet Graph}

\begin{figure}[h]
    \centering
    \includegraphics[width=\columnwidth]{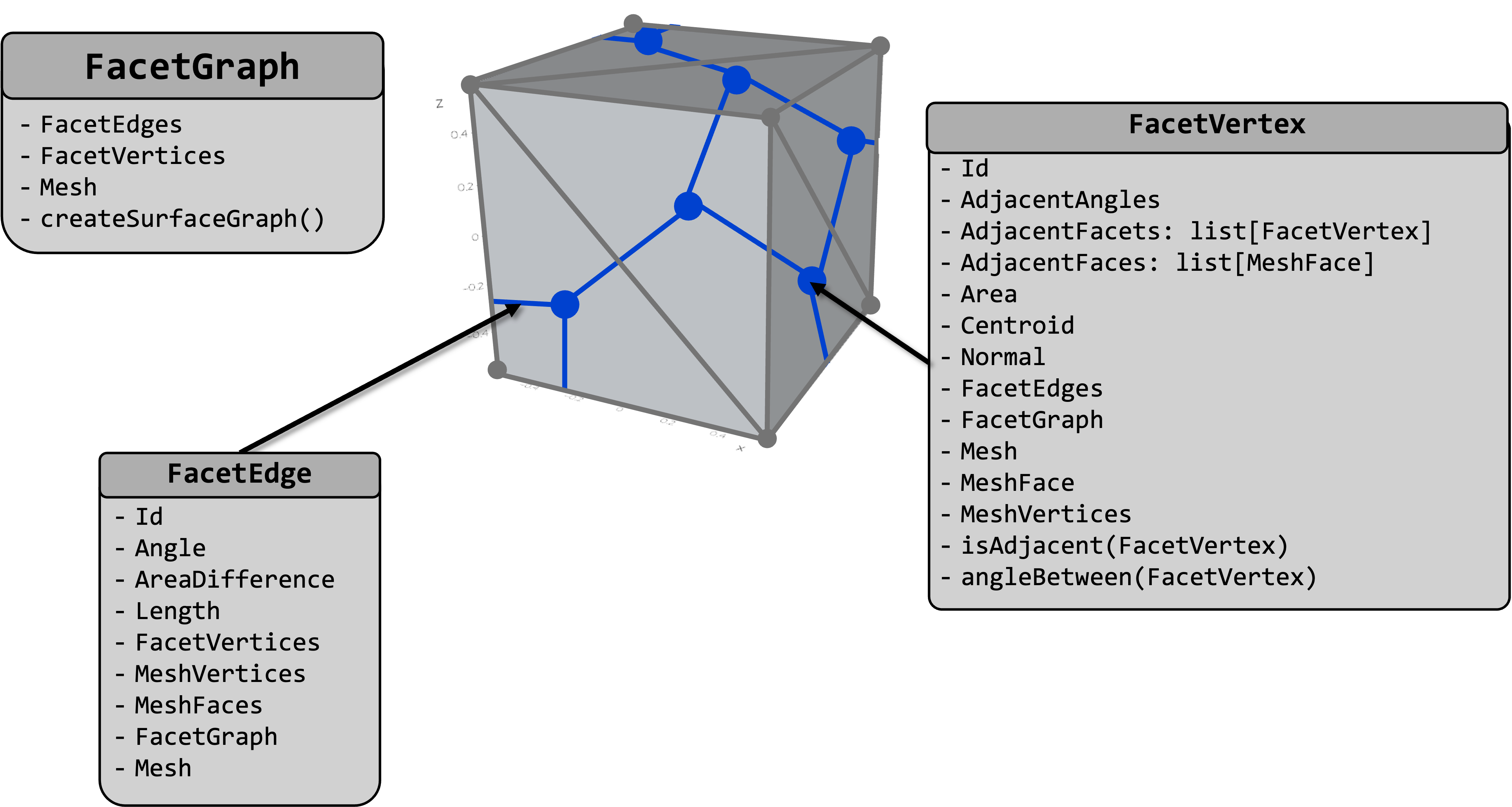}
    \caption{Description of data structures for our \texttt{FacetGraph} classes.}
    \label{fig:facetUML}
\end{figure}

The facet graph is represented by 3 classes (Figure \ref{fig:facetUML}): \texttt{FacetVertex}, \texttt{FacetEdge}, and \\\texttt{FacetGraph}. A \texttt{FacetVertex} represents a face on a 3D mesh, and includes  geometric properties from the \texttt{MeshFace} class. Additional information includes a list of adjacent \texttt{FacetVertex} objects, the angles between each \texttt{FacetVertex}, and list of \texttt{FacetEdge} objects the vertex is a part of.

\texttt{FacetEdge} inherits all information from \texttt{MeshEdge}, yet is retained at the facet level for consistency.

Finally, \texttt{FacetGraph} contains a list of \texttt{FacetVertex} and \texttt{FacetEdge} objects, and the \texttt{Mesh} that it was created from. The \texttt{FacetGraph} also includes a method to convert a \texttt{FacetGraph} to a \texttt{SurfaceGraph} with a user-configurable threshold number.
The conversion algorithm operates as described in Algorithm \ref{alg:partitionToSurface}.

\subsubsection{Implementation of Surface Graph}

\begin{figure}[h]
    \centering
    \includegraphics[width=\columnwidth]{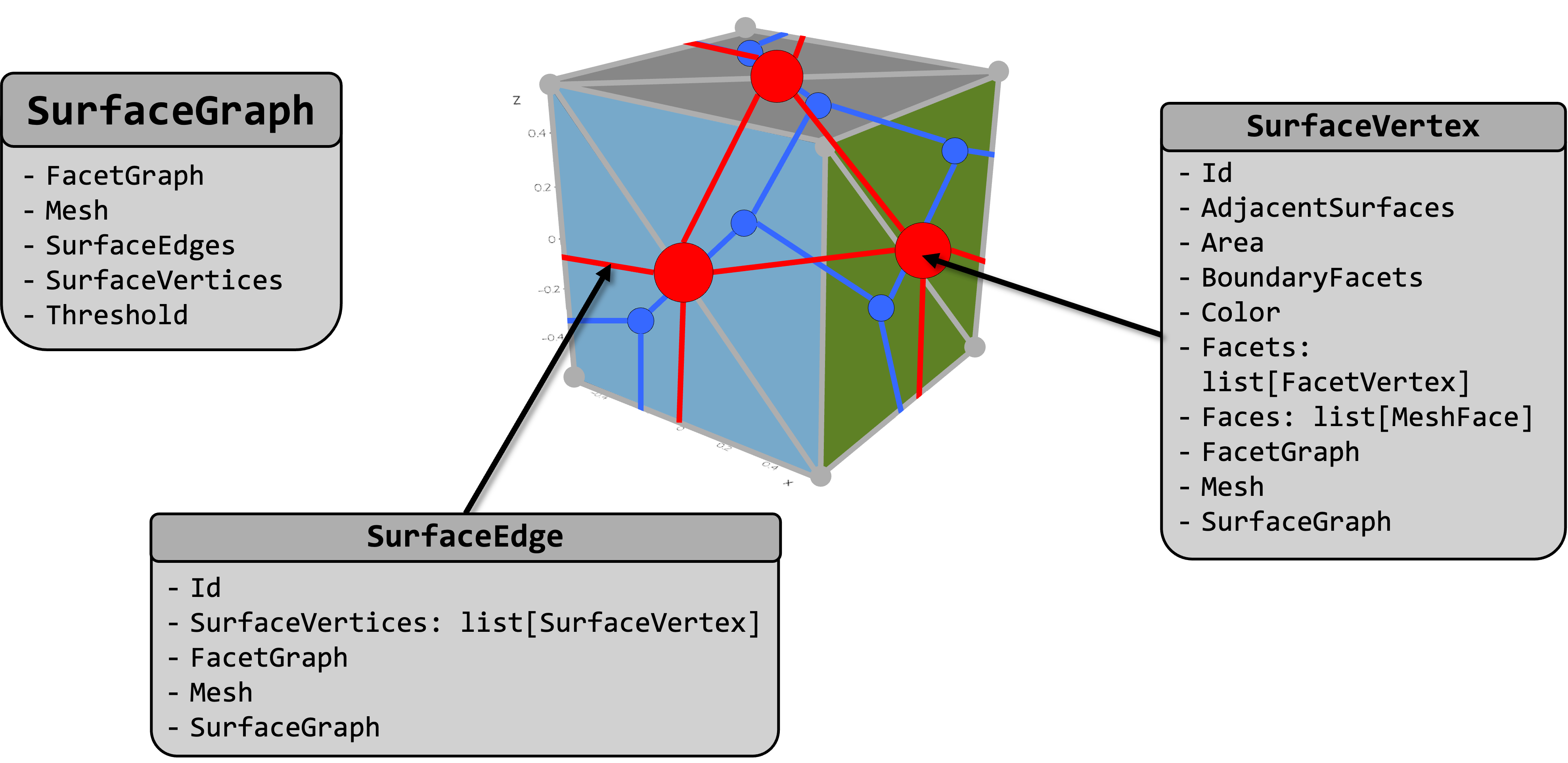}
    \caption{Description of data structures for our \texttt{SurfaceGraph} classes.}
    \label{fig:surfaceUML}
\end{figure}

The final layer of the structured graph we implement for now is the \texttt{SurfaceGraph} (Figure \ref{fig:surfaceUML}). Similar to the previous levels, this reduction contains objects representing a surface's \texttt{SurfaceVertex}, \texttt{SurfaceEdge}, and \texttt{SurfaceGraph}. 

The \texttt{SurfaceVertex} class contains a list of facets associated in the cluster. It also refers to the mesh faces, providing easy access to the points of the original mesh (i.e. the \texttt{MeshVertex}) if needed. Each surface vertex has a list of which facets form the boundary of the surface, computed as described in Algorithm \ref{alg:partitionToSurface}. The total area of all facets in the surface is maintained as well. 
\texttt{SurfaceEdge} lists which \texttt{SurfaceVertex} objects are adjacent. 

The \texttt{SurfaceGraph} has a list of \texttt{SurfaceVertex} and \texttt{SurfaceEdge} objects that make up the graph, and also contains references so the user can easily access the \texttt{FacetGraph} and \texttt{MeshGraph}.

Once the surface graph is computed, it can then be visualized as a color overlay on the facets of the model.

\subsection{Evaluation of SAMS}\label{sec:evaluation}

This section evaluates the performance of SAMS, a structured graph representation of STL files. We designed our experiment to assess the accuracy at which our package is able to populate mesh information. To evaluate our implementation, we assess the correctness of individual files across several mesh processing packages.

The overall research question which should be answered to evaluate SAMS is: \textit{Is our mesh representation accurate, compared with other state-of-the-art mesh packages?} 

SAMS is assessed for accuracy. To evaluate this, we load STL files with the package, record the mesh vertices and facets, and compare with widely used Python mesh processing packages to ensure that the information extracted from meshes by SAMS conforms to information extracted by those packages.

\subsubsection{Data Collection}
The package is evaluated using 2,779 STL files downloaded from Thingiverse \cite{thingiverse.com}. Each file on Thingiverse is provided with a form of Creative Commons License, which may restrict modification or distribution for profit. We use the selected files for evaluating the accuracy of our implementation.

\subsubsection{Experimental Objects}

We have collected 2,779 STL files, therefore we have one data set consisting of 2,779 experimental objects.

These objects vary in complexity. The largest file is 99.7 MB and the smallest is 684 bytes. The distribution of the number of facets, edges, and vertices in the data set is summarized in Figure \ref{fig:experimental-objects}.

\begin{figure}
    \centering
    \includegraphics[width=\columnwidth]{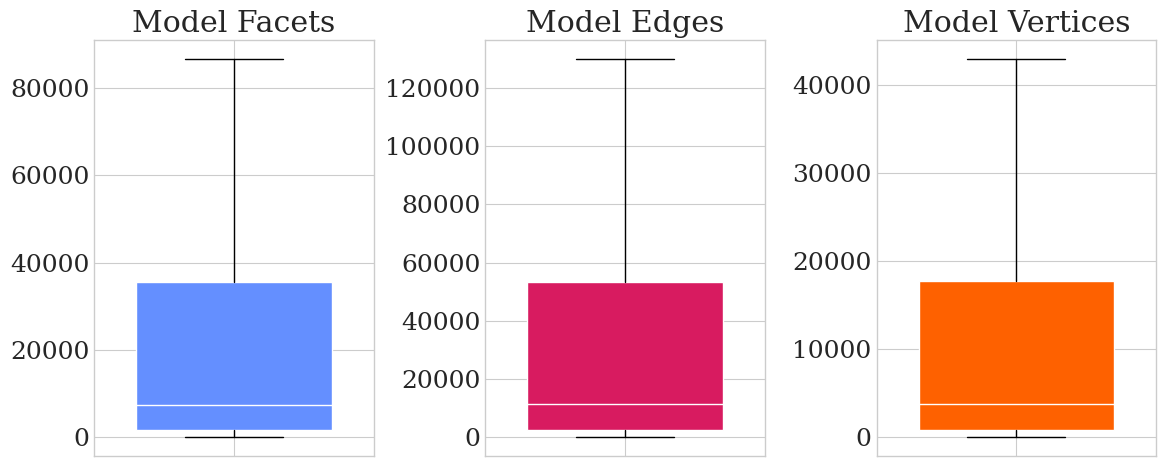}
    \caption{Boxplots summarizing the important statistics of our data set. Due to the size of the data set, outliers were not included. The mean number of facets is 77601.141 and the maximum is 1989076. The mean number of edges is 155200.87 and the maximum is 5967228. The mean number of vertices is 38780.614 and the maximum is 983040.}
    \label{fig:experimental-objects}
\end{figure}

\subsubsection{Variables and Metrics}
\textit{Dependent Variables.} 
The following metric is chosen to answer and evaluate the research question.

\textbf{Percentage match} is the percentage overlap of the respective number of facets, edges, and vertices contained in experimental objects across the independent variables.

\textit{Independent Variables.}
We use the dependent variables and test them across three different Python mesh processing software:

\begin{itemize}
    \item SAMS, the proposed implementation.
    \item {\sc Trimesh}, a Python package used in the Cura slicer \cite{dawson-haggerty2022}.
    \item {\sc PyMeshLab}, a mesh editing and processing software \cite{pymeshlab}.
\end{itemize}

By recording metrics across three different variables, this will allow us to successfully answer the research question. However, since no structured graph tools as we propose currently exist in this object-oriented formulation at all of the mesh, facet, and surface levels, we do not have baseline systems to compare to our work. 

\subsubsection{Study Operation}
The study was completed using a PC running Windows 11 Enterprise 25H2. The machine had 2 TB SSD and 32 GB DDR5 RAM. It tested the dependent variables using Python 3.12.0, trimesh 4.12.2, pymeshlab 2025.7.post1, and our implemented SAMS classes.

\subsubsection{Threats to Validity}

A primary threat to external validity is whether our experimental objects are representative. To mitigate that, we have downloaded over 2,000 experimental objects, ranging in file size and complexity. Some of these files are primitive structures, such as a long rectangular box, which has only 8 vertices and 12 faces. Others are very complex, with 993,157 vertices and 1,988,670 faces.
We believe the chosen objects sufficiently capture the range possible in STL complexity.

Regarding internal validity, the primary threat would be bias or errors in implementation. To mitigate this, the authors contributed to distinct components of the experiment. The 1st author designed the experiment and verified the results. 
The 2nd author implemented the SAMS graphs and executed the experiment. The 3rd author processed the experimental results.

\subsection{Results}\label{sec:results}
Now that SAMS has been evaluated experimentally against comparable mesh processing packages, the results of the experiments are now described in Section \ref{sec:results}. They will be discussed in Section \ref{sec:discussion}.

To evaluate the accuracy of SAMS, we loaded each file in our data set in each of the three evaluated mesh processing packages: SAMS, Trimesh, and PyMeshLab. For each file, we evaluated if the number of facets, vertices, and edges is the same across each package. The results are summarized in Table \ref{tab:acc-table}.

\begin{table}[]
\centering
\caption{Results of the research question of the SAMS' framework accuracy. Trimesh stores duplicated edges as opposed to de-duplicated storage by SAMS and PyMeshLab.}
\label{tab:acc-table}
\begin{tabular}{c|ccc|}
\cline{2-4}
 & \multicolumn{3}{c|}{Percentage Match With SAMS} \\ \hline
\multicolumn{1}{|c|}{\textbf{Package}} & \multicolumn{1}{c|}{\textbf{Edges}} & \multicolumn{1}{c|}{\textbf{Vertices}} & \textbf{Facets} \\ \hline
\multicolumn{1}{|c|}{Trimesh} & \multicolumn{1}{c|}{200\%} & \multicolumn{1}{c|}{100\%} & 100\% \\ \hline
\multicolumn{1}{|c|}{PyMeshLab} & \multicolumn{1}{c|}{100\%} & \multicolumn{1}{c|}{100\%} & 100\% \\ \hline
\end{tabular}%
\end{table}

SAMS matches 100\% with both Trimesh and PyMeshLab for faces and vertices. It matches 100\% with PyMeshLab for edges. When loading a model in Trimesh, by default it stores duplicated edges. However, we assessed that after de-duplicating, each of SAMS and Trimesh stores edges uniquely and correctly.

\section{Discussion}\label{sec:discussion}

The evaluation results are now discussed in detail. SAMS performs on par in terms of correctness. Here, we argue that SAMS is useful in terms of the knowledge stored in the classes, and the framework's ease of use for further high-level analysis.

\subsection{Selection of Experimental Objects}

The initial data collection began by gathering 2,780 experimental objects. 

One file was removed from the data set because of UTF8 filename errors on the machine used to operate the study. The removed file was named with Cyrillic characters, which caused compatibility issues in Windows.

Four ASCII files had syntax errors where the file ended with \texttt{end solid} instead of the standard \texttt{endsolid}. These were fixed by deleting the space. The files were otherwise unchanged and returned to the data set.

\subsection{Choice of Independent Variables}
There are other Python mesh processing packages which were not included as independent variables. These include Open3D, which was discarded because the edge calculation of the mesh needed to be done manually and thus would not have been an equal comparison. PyVista and Vedo were discarded because of errors opening files with VTK.

We restricted the evaluation to mesh-level accuracy because no publicly available software package provides functionality directly comparable to our full framework. In particular, our hierarchical, structured graph formulation is novel and therefore not directly benchmarkable against existing mesh processing libraries. Although comparisons at the mesh level are feasible, since we operate on identical mesh inputs and can assess outputs under equivalent conditions, there is currently no open-source baseline implementing an analogous graph representation framework. Establishing such a baseline would require implementing the proposed SAMS within the backends of multiple mesh processing packages to ensure a fair comparison. Given the substantial engineering effort required, we did not pursue this implementation in the present work.

\subsection{SAMS Accuracy}

SAMS correctly imports all relevant information from an STL file. With the exception of how Trimesh chooses to double the storage of mesh edges, there is no difference in correctness between SAMS and comparable mesh processing packages. Thus, SAMS is safe to use in mesh processing.

\subsection{Sources of Overhead}
In our loading of meshes, we noticed that SAMS was performing slower than Trimesh and PyMeshLab. We now describe several sources of overhead that may be contributing, as well as the benefits of the increased processing time in the following subsection.
A possible source of overhead in general is the processing time spent to parse ASCII instead of binary STL files. Binary files are much smaller and therefore take less time to process than the larger ASCII STL files.
The overall processing time and memory used can be assisted by multithreading, but we chose not to do this in order to represent how a typical end user could conduct this experiment.

Part of the benefit of SAMS can be found in the lack of dependencies: our proposed method is designed to have few dependencies. We do not use a graph library like NetworkX due to the known inefficiency of that package. 
Ultimately, the SAMS framework is roughly 10-15 times less efficient than the comparable packages. 
Trimesh has dependencies of NumPy, SciPy (for graphs) and Pyglet (for visualization purposes). PyMeshLab also depends on NumPy. Both of these packages have roughly linear processing times as well. Our implementation uses NumPy only in very limited circumstances to calculate the angle between two facets. This is because we found some inefficiencies with NumPy as we were implementing the construction of surface graphs.

\subsection{Information retained by SAMS}
We parsed a mesh from scratch using few dependencies because it allows us to store relevant information into our data structures. Since the SAMS framework supports the clustering of vertices, it can store useful information to provide during later analysis. Its modularity thus can be leveraged to account for higher levels of the structured graph. The facets can be clustered according to an algorithm such as Algorithm \ref{alg:partitionToSurface} to obtain a surface graph, which can then be used for analysis, such as detecting anomalous mesh vertices. 

In doing so, SAMS provides ease of information access. It is simple to transition between the mesh, facet graph, and surface graph levels. It is simple to query a facet's adjacent faces. It is simple to query which vertices are associated with a group of triangles. These queries are more complex in other packages, requiring a script or multiple chained commands.

Mesh processing packages can contain numerous useful functions, but SAMS does not include many of those. For example, we provide no visualization services, only the modular code structures. However, our package is designed to easily interface with other packages that provide visualization services, particularly PyMeshLab and Vedo. This enables us to keep overhead as light as possible, while reserving memory and compute for the necessary structures of the STL object.

Such an approach has success in program analysis. By processing information about the system or data upfront, more intense computations done later have drastic improvements in terms of efficiency and ease of use \cite{tsutano2019}. Therefore, we now consider the ways SAMS graphs may be utilized.

\section{Utility Scenarios and Threat Model}\label{sec:usecases}

Three scenarios of SAMS's utility are now considered. These include mesh operations (Section \ref{sec:re-tri}), material assurance (Section \ref{sec:material}), as well as the broader security of the additive manufacturing pipeline (Section \ref{sec:threatmodel}). 

\subsection{Scenario 1: Re-Triangulation of Mesh}\label{sec:re-tri}

De Floriani et al \cite{defloriani1989} previously used the structured graph, consisting of what we now call the mesh and facet graph, to re-triangulate an area of the mesh. The operation may be further enhanced with our introduction of the surface graph. For flat areas, a vertex in the surface graph maintains its boundary edges, making re-triangulation a simple process with an algorithm of the user's choosing. Curved areas may require manual inspection to ensure the local topology changes do not disturb the object's intended tolerances, but with a small enough threshold per surface, re-triangulation may be obtained similarly. By restricting focus as such to a specific area of the mesh, re-triangulation may be targeted to an area of the user's choice. Such an operation allows for reducing mesh file size as needed.

\subsection{Scenario 2: Material-Specific Assurance of Model Geometry}\label{sec:material}

Because we focus solely on geometry in the digital model file, our approach can generally assist users who wish to identify and quantify a specific geometric error before the part goes to be manufactured. 

The proposed approach can also be tailored to specific use cases. This is because additive manufacturing can be accomplished using a variety of materials. Each material has different vulnerabilities with regards to structural failure \cite{awdallah2025}. For example, PLA, a type of plastic, is susceptible to fracture at sharp corners. Using our facet graph, we can compute the surfaces using Algorithm \ref{alg:partitionToSurface}, and set a high angle threshold such as $90^{\circ}$ for clustering. Doing so reveals areas of the part that could pose an issue if PLA were used for manufacturing.

Corners also present particular geometric issues in 3D printing of concrete \cite{bos2016}. It is recommended that sharp corners be rounded to avoid differences in the rate of material deposition. The exact threshold needed depends on the machine and type of filament used. Cantilevering geometry also presents an issue. SAMS's surface graph can detect sharp corners as well as transitions to sharply descending geometry.

Parts can also be made with metal such as steel, titanium, copper, and aluminum \cite{markforged-metal}. Metal components can be strong enough to be used in other machinery. Small changes to part geometry can affect tolerances in interlocking parts. For example, in a common part used to test 3D printers, a gear bearing with slightly modified coordinates of existing vertices can result in the components locking, as well as uneven wear and tear. Small mutations in vertex coordinates can currently be identified by visual inspection. However, SAMS' surface graph can make the process of detecting such changes easier. Currently, if the modification is large enough and the threshold is small enough, an instance of vertex shifting may be detected with SAMS by computationally identifying mesh vertices that are a part of many small surfaces (Figure \ref{fig:vshift}). We are actively working to develop an automated detection method of this modification that has a high throughput for many kinds of models.

\begin{figure}[h]
    \centering
    \includegraphics[width=\linewidth]{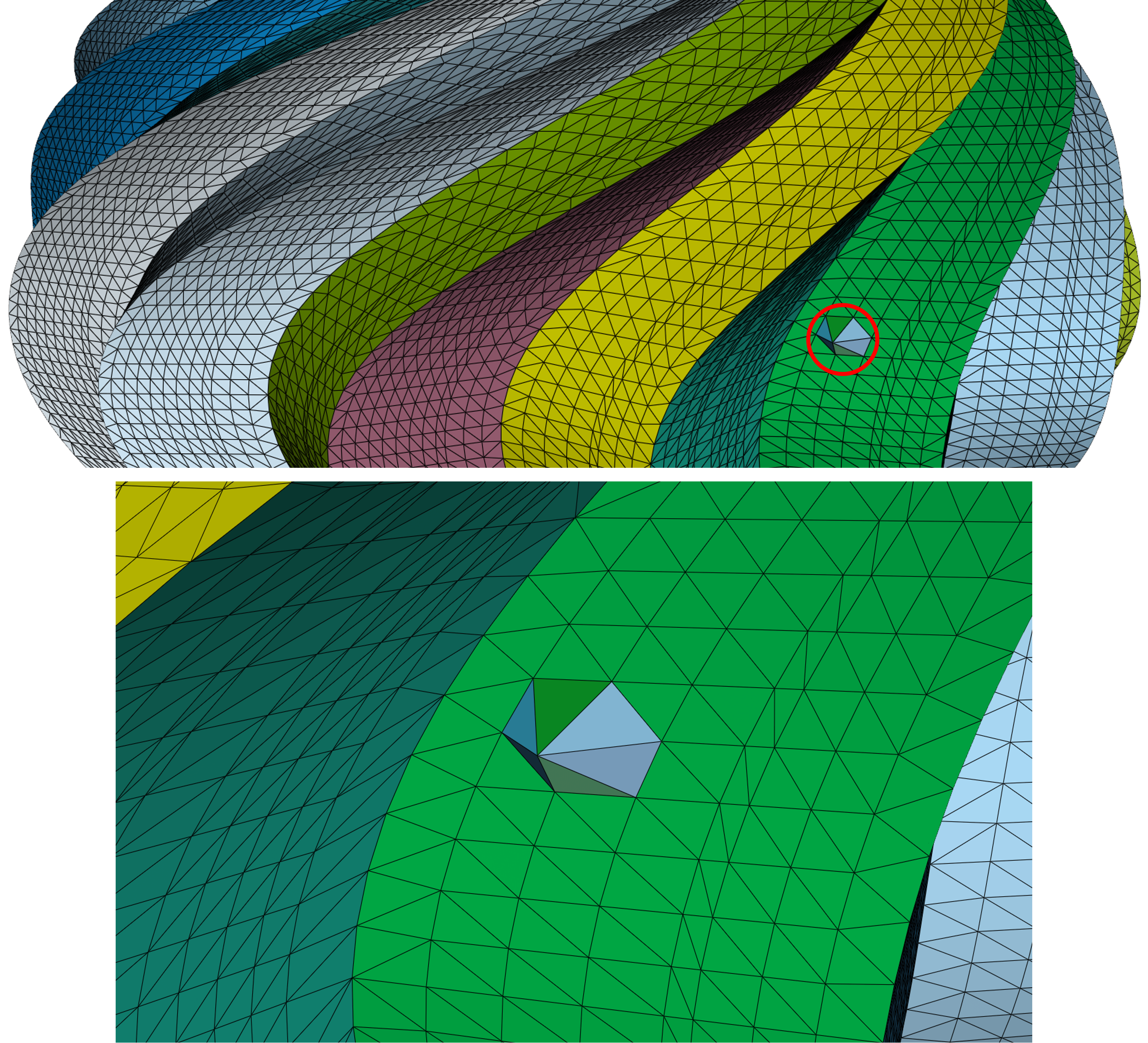}
    \caption{A shifted vertex which can be identified using SAMS graphs as the intersection of several small surfaces.}
     \Description[A vase with a spike]{A figure with two parts: The top shows a vase with colored with multicolored bands. The vase has a small spike on the right hand side which is emphasized with a red circle. The bottom is zoomed in on the spike. It is defined by multiple differently colored triangles.}
    \label{fig:vshift}
\end{figure}

\subsection{Scenario 3: Security of Additive Manufacturing Pipeline}\label{sec:threatmodel}
We seek to use the SAMS software to identify errors, defects, or irregularities in STL files. The research reported here is useful for graphics, the additive manufacturing (AM) supply chain, and computer security.
A survey on attack taxonomies to the AM supply chain has been accomplished by Kumar et al \cite{kumar2025}. It identifies a number of possible threats facing AM. It expands on the work reported by \cite{yampolskiy2018}. We provide a threat model based on that taxonomy of attack goals, targets, and methods.

Our work focuses on the stage between conversion of a CAD model to STL and generation of tool-path code. We focus on the ``Data Breach and Manipulation" attack goal. This is tied to the attack target ``STL file integrity". The target involves the following attack methods:
\begin{itemize}
    \item 
"Reverse engineering of AM design", 
\item "Cyber intrusion on AM equipment", and 
\item "Exploitation of supply chain weakness" attack methods. 
\end{itemize}
Our framework aims to combat attack philosophies matching one or more of these methods.
We aim to combat attacks which can specifically sabotage STL mesh files, since the effects of intrusion or manipulation of a file can propagate to later stages of manufacturing, leading to wasted labor and material.

Current work in this area includes detecting of voids, or negative space, in 3D meshes \cite{roscoe2025}. A future endeavor is to detect vertex shifting as described in the previous subsection. STL files that have been modified with a vertex shift operation could be interpreted as just introducing noise to a model, but the modification has significant implications for the security of the industry.
Current and likely future legislation in the United States has involved the attempt to ban printing of certain parts, usually those involved in the manufacturing of weapons that are created in whole or in part by additive manufacturing \cite{nybills9005c,ab2047}. The legislation involves blueprint detection (in the case of California's AB 2047 \cite{ab2047}) of 3D models, a method that would need to verify that the model being printed does not match that of a model known to be a threat. Though not currently specified at the time of writing, a detection method could involve computer vision, AI processing of a model, or checking a model's hash against a predetermined database. However, depending on how the requirement is implemented, a simple vertex shift could assist individuals in evading a hash check of a given model. Multiple vertex shifts could lead to AI or computer vision failing to flag the model as suspicious. Therefore, vertex shifting of models is an important area of research in additive manufacturing to investigate. Some work has been progressed in this area in \cite{zhao2021}. SAMS's initial investigation of this is also a promising first step.

Now the related work to our proposed method is presented.

\section{Related Work}\label{sec:related}

In this paper we proposed a structured graph framework for 3D meshes for the purpose of bolstering the security of additive manufacturing. This framework is hierarchical, with the aim of identifying geometric issues in a mesh and in more complex mesh analysis operations. We now address similar or related works to ours of two kinds: those which specifically use a hierarchical graph to analyze or process meshes, and those involving graph modeling of meshes in some way.

First, STL-encoded triangle meshes are routinely analyzed and manipulated using mesh processing toolchains. For computational efficiency, mesh topology is often mapped to a relational graph representation enabling graph algorithms to be applied directly.

Several software frameworks explicitly support mesh-to-graph representations to support geometric computations, such as the Trimesh Python library \cite{dawson-haggerty2022}. Graph representations are also widely used as inputs to convolutional and graph neural network models for tasks such as shape classification \cite{song2020,tang2021,li2023} and mesh segmentation \cite{fotopoulou2020,vecchio2023}.

Beyond flat adjacency graphs, hierarchical graph structures have been proposed to encode higher-level part semantics. For example, {\sc StructureNet} \cite{mo2019a} constructs a part hierarchy and propagates information through an encoder-decoder architecture, yielding a graph-structured representation of semantic features. Our approach similarly employs a hierarchy, but additionally supports information exchange across multiple hierarchy levels as required by a given operation; we implement this via structured graphs.

Dual graphs of triangulated networks were abstracted into structured graphs by De Floriani et al. \cite{defloriani1989}. While we build on the general notion of a structured graph, our objectives and graph representation differ: their abstraction targets a 2D plane embedding to support hierarchical triangulation, whereas we preserve the 3D mesh information and treat the hierarchy as a reusable representation for analyses beyond triangulation, including additive manufacturing-specific use cases.

Finally, graph-based abstractions have been used for security-oriented modeling in additive manufacturing; Rahman et al. use an attack graph model for risk analysis \cite{rahman2024}. In contrast, we use graphs to abstract the STL geometry/topology itself, with the goal of deriving or assessing security-relevant properties from the part representation.

Now we address works which use some form of graph modeling to assist in mesh computations or analysis. 

Facet graphs are used as input to graph convolutional neural networks in \cite{tang2021} and \cite{lei2023} for shape classification. The studies \cite{pfaff2020, lee2019, ranjan2020,keuper2015} assist with various mesh processing applications such as encoding/decoding \cite{pfaff2020}, studying connectivity \cite{lee2019,ranjan2020}, and image decomposition \cite{keuper2015}. A facet graph is used to help reconstruct a mesh volumetrically in \cite{hornung2006}. 
Studies which use a mesh graph data structure include \cite{fotopoulou2020,vecchio2023}. These studies use the graph for mesh segmentation in the case of \cite{fotopoulou2020} to schedule complex operations in computer vision.

By sampling the facet graph iteratively and re-drawing adjacencies at each step, a series of hierarchical graphs are constructed by \cite{li2023}. The final graph is used as input to a neural network that is robust for mesh errors and can be applied to shape classification. Our proposed work has different strata of graphs. The cited work samples vertices. Our approach includes a heuristic for clustering according to geometric features. Furthermore, we treat those clusters as an entirely separate level of graph. 

Hex meshes can use dual sheets to reduce mutual intersections \cite{takayama2019}. THe notion of dual is used to accomplish intersection reduction, as well as preserve sharp features when hexahedralizing a mesh. The cited work is similar to our surface graph which can highlight sharp details with a user-set threshold, though the surface graph is part of a hierarchy of graphs and currently works only with triangle meshes, not with hex meshes.

Primal and dual graphs are constructed in \cite{milano2020}. Their definition of dual is different: their primal graph is analogous to our facet (dual) graph, and their dual graph is constructed by associating each edge of the primal graph (i.e. every triangle) with a vertex). In this way, the dual graph is constructed by the edges of the original mesh, and two vertices of the dual graph are joined if they compose the same triangle.
Both the dual and primal graphs in their topological construction are equivalent to our facet graph. The cited work also encodes mesh information such as dihedral and internal angles to each node of the graph. We also encode dihedral angles to the facet graph vertices. Our method proposes clustering of graph vertices for further analysis.

A mesh is split into convex surfaces by \cite{mamou2009}. The dual graph of the mesh is used to collapse half-edges, and a concavity measure is used. Our method supports these operations as well, and we encode more than just concavity into our dual graph vertices.

Denoising in light of 3D surface scanners is accomplished in \cite{zhao2021}. The method relies on local area and does not split the area on sharp edges, as the study's aim is to reduce the noise in a mesh. We recommend this paper to those seeking an alternative reduction of the facet graph.

In the area of security, both \cite{sturm2017} and \cite{belikovetsky2017} address the potential for model defects to impede a part's function and stability. This initial potential is confirmed in attack taxonomy surveys conducted by \cite{yampolskiy2018} and \cite{kumar2025}. The facet graph is also used via its connected components to identify shells in \cite{roscoe2025}, which are then analyzed for self-intersection to detect void defects in STL files. 

While the cited studies may use elements of graph modeling, their use is so far limited to the particular application, such as shape classification, mesh segmentation or processing, and denoising. To the best of our knowledge, SAMS links these applications together and provides a comprehensive graph model and framework. We address its graph-theoretic properties and provide an implementation for the unified graph representation.

\section{Conclusion}\label{sec:conclusion}

This paper has described a structured graph framework for the Security of Additive Manufacturing Systems. The framework, SAMS, is structured, according to levels of abstraction of a triangle mesh 3D model. We formally define this framework, address its graph-theoretic properties, and identify several scenarios for its practical use. While the processing time of our implementation is higher than comparable mesh processing packages, we establish that our graph classes are useful in their modular, adaptable structure, to enable more complex computations and analysis to be done efficiently at later stages. The proposed framework is well-suited to be used for mesh analysis and ensuring dependability of 3D model files.

To facilitate faster load times and computation, we plan to update the implementation of SAMS and use pybind11 to implement the core functionality in C++ in order to reduce overhead and improve performance. We also plan to expand the compatibility of SAMS to support other polygonal mesh types such as 3MF and OBJ.

\begin{acks}
    The authors wish to thank Viet Ninh for compiling the data set used in the experiments.
\end{acks}
\bibliographystyle{acm}
\bibliography{STLgraph.bib}

\end{document}